\date{}
\newcommand{\ignore}[1]{}
\newcommand{\x}{{\bf x}}
\newcommand{\y}{{\bf y}}
\newcommand{\z}{{\bf z}}
\newcommand{\n}{{\bf {n}}}
\newcommand{\argmax}{\operatornamewithlimits{argmax}}
\newtheorem{lemma}{Lemma}
\newtheorem{property}{Property}
\begin{document}
\title{\LARGE On the feasibility of beamforming in millimeter wave communication systems with multiple antenna arrays
}

\author{\IEEEauthorblockN{Jaspreet Singh and Sudhir Ramakrishna}

\IEEEauthorblockA{Samsung Research America - Dallas, TX, USA \\
Email: \{jaspreet.s, sudhir.r\}@samsung.com \\
\vspace{-10mm}
\thanks{
A preliminary version \cite{GC14}  of this paper has been accepted for publication at IEEE Globecom, 2014 .
}
}
}

\maketitle \vspace{-7mm}

\vspace{-3mm}
\begin{abstract}
\vspace{-3mm}
The use of the millimeter (mm) wave spectrum for next generation (5G) mobile communication has gained significant attention recently. The small carrier wavelengths at mmwave frequencies enable synthesis of compact antenna arrays, providing beamforming gains that compensate the increased propagation losses. In this work, we investigate the feasibility of employing multiple antenna arrays (at the transmitter and/or receiver) to obtain diversity/multiplexing gains in mmwave systems, where each of the arrays is capable of beamforming independently. Considering a codebook based beamforming system (the set of possible beamforming directions is fixed {\emph a priori}, e.g., to facilitate limited feedback), we observe that the complexity of \emph{jointly} optimizing the beamforming directions across the multiple arrays is highly prohibitive, even for very reasonable system parameters. To overcome this bottleneck, we develop reduced complexity algorithms for optimizing the choice of beamforming directions, premised on the sparse multipath structure of the mmwave channel. Specifically, we reduce the cardinality of the joint beamforming search space, by restricting attention to a small set of dominant candidate directions. To obtain the set of dominant directions, we develop two complementary approaches: (a) based on computation of a novel spatial power metric; a detailed analysis of this metric shows that, in the limit of large antenna arrays, the selected candidate directions approach the channel's dominant angles of arrival and departure, and (b) precise estimation of the channel's  (long-term) dominant angles of arrival, exploiting the correlations of the signals received across the different receiver subarrays. Our methods enable a drastic reduction of the optimization search space (a factor of 100 reduction), while delivering close to optimal performance, thereby indicating the potential feasibility of achieving diversity and multiplexing gains in mmwave systems. \looseness-1
\end{abstract}

\vspace{-2mm}
\IEEEpeerreviewmaketitle

\vspace{-7mm}
\section{Introduction}

The use of the millimeter (mm) wave band for next generation (5G) mobile communication has gained considerable attention recently \cite{Pi_MMW, Rangan}. Vast amounts of spectrum (both licensed and unlicensed) are available in the mmwave band (typically considered to be 30-300 GHz), making it attractive for high data rate communication. While propagation losses in the mmwave band are higher compared to those at lower microwave frequencies used currently for mobile communication, the smaller carrier wavelengths for mmwave frequencies also imply that more antennas can be packed in a relatively small area, making it feasible to design compact high-gain antenna arrays that can compensate for the increased propagation losses.\looseness-1

In current systems employing multiple antennas (i.e., current MIMO systems, e.g., 3GPP LTE \cite{36_211}), beamforming (or, precoding) is performed at baseband (BB), and the precoder outputs are fed into the different transmit antennas using a separate radio frequency (RF) chain (implied to include the upconversion components and the power amplifier) for each antenna. For mmwave systems employing large antenna arrays, such an architecture is considered infeasible, given the prohibitive cost of the large number of RF chains and mixed signal components (D/A and A/D converters) \cite{ICUWB09}. Rather, beamforming may be performed at the RF level using a set of analog phase shifters, which limits the number of required RF chains to one.\looseness-1

To obtain diversity/spatial multiplexing gains in such systems, a natural strategy is to consider the use of a transmitter (Tx) and/or receiver (Rx) antenna array consisting of multiple \emph{subarrays}, where each of the subarrays is capable of independent electronic beam steering using RF phase shifters (concept first introduced in \cite{Torkildson}, in context of line-of-sight mmwave MIMO systems), see Fig. \ref{fig:CSI}. In essence, each subarray emulates a \emph{virtual} antenna (that is capable of directional transmission) in the sense of current MIMO systems, and spatial multiplexing of $L$ data streams can be supported, in principle, using at least $L$ subarrays at both the Tx and the Rx. In addition to the RF beamforming at each subarray, a baseband precoder may also be employed at the Tx (as in current systems; with each subarray being analogous to one antenna in current systems) to process the data to be sent on different streams, providing an additional level of flexibility on top of only the phase shift operations performed at RF. The overall precoding operation may then be referred to as \emph{hybrid} (mix of analog and digital) precoding.\looseness-1

We consider the preceding \emph{array-of-subarrays} architecture, and study the corresponding hybrid precoder optimization problem, involving a joint optimization over the choice of the Tx BB precoder, and the Tx/Rx RF precoders (i.e., the Tx/Rx RF beamforming directions at the different subarrays). In this work, we consider codebook based precoding (wherein the different precoders are picked from a priori fixed codebooks), a framework applicable, for example, to limited feedback settings. (For instance, in current 3GPP cellular systems, the channel feedback from the mobile station (MS) to the base station (BS) consists of an index corresponding to the preferred codebook precoder, as opposed to more detailed analog feedback of the channel.) Recent related work \cite{Omar, Alkhateeb, Omar_ASA} has considered channel sparsity based low complexity hybrid precoding for mmwave systems in different contexts (e.g., employing a "fully connected" architecture (each RF chain is connected to all the antennas), in contrast to the array-of-subarrays architecture considered here, and/or focussing on the unconstrained (i.e., non-codebook based) precoding problem. To the best of our knowledge, besides our preliminary results reported in \cite{GC14}, codebook based precoding with the array-of-subarrays architecture has not been investigated in the literature. Another related set of prior works considers beamforming optimization for indoor mmwave systems (e.g., systems based on the IEEE 802.11ad and 802.15.3c standards). In particular, reduced complexity beam search methods have been considered, e.g.,  via a two-level beam search protocol (sector level sweep, followed by a beam refinement procedure within the selected sector) \cite{IEEE_802_15_3c}, and via advanced numerical optimization techniques \cite{Li_WPAN}. All such prior work, however, has been restricted to systems with a single antenna (sub)array at the Tx and the Rx, unlike the multiple antenna (sub)array architecture considered here. \looseness-1

A straightforward approach to codebook-based hybrid precoder optimization problem would be to try all possible combinations of the different precoders. However, as we observe here, the complexity of such an approach scales exponentially with the number of subarrays employed at the Tx and the Rx, with the exponential scaling arising out of the need to assign an RF beam direction to each of the subarrays (at the Tx and the Rx). In fact, the absolute number of combinations to try out scale rapidly enough to make this approach prohibitive, even for reasonable system parameters. To overcome this bottleneck, we propose and investigate reduced complexity precoding algorithms. Our algorithms are premised on exploiting the structure inherent to the mmwave channels \cite{Smulders_channel, Rapp_channel, Rapp_channel_2, channel_ad}. In particular, mmwave channels are characterized by a sparse multipath structure, which in the spatial domain corresponds to a small number of (dominant) angles of departure (AoDs) from the Tx and a small number of (dominant) angles of arrival (AoAs) at the Rx. This implies that, most of the signal power is captured in a small number of spatial directions (at both, the Tx and the Rx), so that the RF precoder search complexity (for both, the Tx and the Rx) can be reduced, in principle, by restricting attention to a small set of spatial directions. In this work, we build on the preceding motivation, and develop reduced complexity precoding algorithms. A summary of our contributions, is as follows:\looseness-1

\subsection{Contributions}
\vspace{-3mm}
\begin{itemize}
\item We consider codebook-based hybrid precoding in mmwave systems with multiple antenna arrays. Our work identifies, for the first time in the literature, the prohibitive complexity bottleneck for this problem.
\item Exploiting the sparse multipath nature of mmwave channels, we develop a systematic approach for reduced complexity precoding. Specifically, we first develop algorithms to obtain (reduced cardinality) sets consisting of only the {\emph{dominant}} RF beams, and then perform precoder optimization by searching over all precoder combinations within these sets, thereby achieving significant complexity reduction.
\item To obtain the preceding reduced cardinality sets, we develop two complementary approaches.
\begin{itemize}
\item In the first approach, we use a novel spatial power metric, termed the effective power, to select the dominant Tx and Rx RF codebook beams. Computation of the effective power averages out the spatial fluctuation of the received signal strength across the different subarrays, providing a set of candidate directions that are suitable across \emph{all} the subarrays. A detailed analysis of the power metric is conducted for large antenna array sizes, and it is shown that, in the large system limit, the selected beams approach the channel AoA and AoD directions.
\item While the preceding approach works with instantaneous channel coefficients (thereby providing instantaneous dominant beam directions), in a complementary approach, we develop an algorithm to estimate the \emph{precise} long-term dominant (i.e., average power dominant) AoAs at the Rx. Our algorithm utilizes the phase of the measured correlations in the symbols received across the spatially separated Rx subarrays, which is intricately linked to the channel AoAs. The resulting (small number of) AoA directions are then used as the candidate RF beams at the Rx.\looseness-1
\end{itemize}
\item Simulation results (including over the IMT urban micro spatial channel model \cite{IMT_SCM}) show that the proposed methods achieve performance close to that attained with exhaustive optimization, while reducing the search complexity enormously (to realistic levels), indicating the potential feasibility of using multiple arrays in mmwave systems.
\end{itemize}

\begin{figure}[]
      \centerline {\epsfig{figure=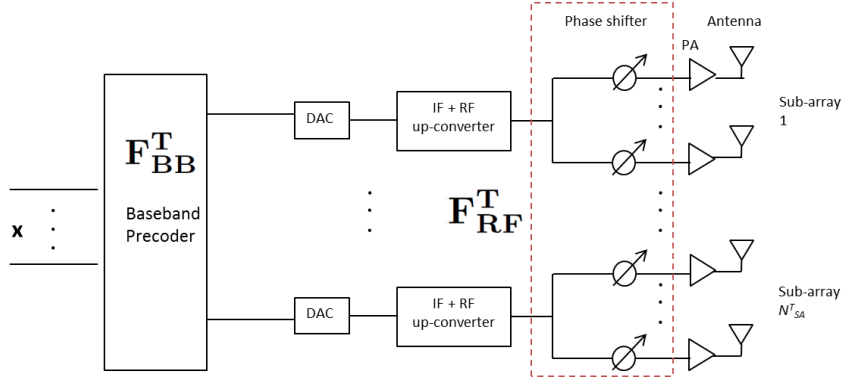, width=11.7cm, height=5.0cm}}
      \vspace{-3mm}
      \caption{Array-of-subarrays architecture for mmwave transmitter.}
       \label{fig:CSI}
       \vspace{-8mm}
 \end{figure}

\vspace{-3mm}
\section{Transceiver Architecture and Channel Model}\label{sec:Arch}
We consider an array of antennas employed at the Tx and at the Rx. The Tx (Rx) array comprises multiple subarrays (Fig. 1 depicts the Tx structure), with each subarray connected to one of the RF chains employed at the Tx (Rx). We denote by $N^T_{Ant}$ ($N^R_{Ant}$) the total number of antennas in the Tx (Rx) array. The number of Tx (Rx) subarrays is denoted by $N^{T}_{SA}$ ($N^{R}_{SA}$), while the number of antennas in each Tx (Rx) subarray is  $N^T_{Ant_{SA}}$ ($N^R_{Ant_{SA}}$). (Note that $N^T_{Ant} = N^T_{SA} \times N^T_{Ant_{SA}}$, and $N^R_{Ant} = N^R_{SA} \times N^R_{Ant_{SA}}$.) {\bf{H}} denotes the $N^R_{Ant} \times N^T_{Ant}$ (narrowband) MIMO channel,
$N_L$ is number of layers (i.e., data streams), ${\bf{F}}^{T}_{BB}$ is the ($N^T_{SA} \times N_L$) Tx BB precoder, ${\bf{F}}^{T}_{RF}$ is the ($N^T_{Ant} \times N^T_{SA}$) Tx RF precoder, and ${\bf{F}}^{R}_{RF}$ is the ($N^R_{Ant} \times N^R_{SA}$) Rx RF precoder.

Denoting by $\x$ the $N_L \times 1$ vector of transmitted symbols, and by $\y$ the $N^{R}_{Ant} \times 1$ vector of symbols received across the receiver antennas, we have
\vspace{-2mm}
\begin{equation}
\y = {\bf{H}} {\bf{F}}^{T}_{RF} {\bf{F}}^{T}_{BB} \x + {\bf{w}}  \ \ ,
\end{equation}
\vspace{-2mm}
where $\bf{w}$ is a $N^{R}_{Ant} \times 1$ noise vector, consisting of i.i.d. $CN(0, \sigma^2)$ entries. 
After RF precoding at the receiver, the $N^{R}_{SA} \times 1$ vector of symbols received across the receiver subarrays is,
\vspace{-2mm}
\begin{equation}\label{eqref:Pr_noise}
\z = {{\bf{F}}^{R}_{RF}}^* {\bf{H}} {\bf{F}}^{T}_{RF} {\bf{F}}^{T}_{BB} \x + {{\bf{F}}^{R}_{RF}}^* {\bf{w}}  \ ,
\end{equation}
\vspace{-2mm}
where $(\cdot)^*$ denotes the conjugate transpose operation.

Note that for the array-of-subarrays architecture, the RF precoder matrix (at the TX and Rx) possesses a special structure. Since each subarray is connected to only one RF chain, each column of the RF precoder matrix is zero except for a contiguous block of nonzero entries (consisting of the beamforming weights used on the corresponding subarray).

The beamforming vector in each column of the (Tx/Rx) RF precoder is normalized to have unit power, so that ${{\bf{F}}^{R}_{RF}}^* {\bf{F}}^{R}_{RF} = {\bf {I}}_{N^R_{SA}}$ (and ${{\bf{F}}^{T}_{RF}}^* {\bf{F}}^{T}_{RF} = {\bf {I}}_{N^T_{SA}}$). Consequently, the entries in the processed noise ${{\bf{F}}^{R}_{RF}}^* \bf{w}$ are still i.i.d  $CN(0, \sigma^2)$. Therefore, we can equivalently write \eqref{eqref:Pr_noise} as
\begin{equation}
\z = {{\bf{F}}^{R}_{RF}}^*{\bf{H}} {\bf{F}}^{T}_{RF} {\bf{F}}^{T}_{BB} \x + \n  \ \ ,
\end{equation}
with $\n$ consisting of i.i.d. $CN(0, \sigma^2)$.

\vspace{-2mm}
\subsection{Channel Model}\label{sec:ch_model}
\vspace{-2mm}
In contrast to the rich scattering model assumed for microwave frequencies, mmwave channels are better characterized by a limited number of scatterers. A ray-cluster based spatial channel model (each scatterer resulting in a cluster of channel rays) is typically employed \cite{Smulders_channel, Rapp_channel, Rapp_channel_2, channel_ad}. With $N^T_{Ant}$ $(N^R_{Ant})$ number of Tx (Rx) antennas, the (narrowband) channel is represented as,\looseness-1

\vspace{-6mm}
\begin{multline}\label{mmw_channel}
{\bf{H}} = \sqrt{N^T_{Ant}N^R_{Ant}}\displaystyle\sum_{c=0}^{N_{c}-1}\displaystyle\sum_{r=0}^{n_{r}-1}{G_{c,r} {\bf{a}}^R(\phi_{AoA,c,r} \ , \ \theta_{AoA,c,r})}  {{{{\bf{a}}^T}(\phi_{AoD,c,r} \ , \  \theta_{AoD, c,r})}}^* \ ,
\end{multline}
\vspace{-1mm}
where $\{G_{c,r}$, $\phi_{AoA,c,r}$, $\phi_{AoD,c,r}$, $\theta_{AoA,c,r}$, $\theta_{AoD,c,r}\}$ denote the complex gain, azimuthal AoA, AoD, and elevation AoA, AoD of ray $r$ in cluster $c$, resp., and ${\bf{a}}^R(\cdot)$ and ${\bf{a}}^T(\cdot)$ denote the array response vectors for the Rx and the Tx antenna arrays, resp. For a uniform planar array in the $yz$ plane, with $N_y$ and $N_z$ antenna elements along the $y$ and $z$ axes (with $N=N_y \times N_z$),  the array response is \cite{Balanis},
\vspace{-2mm}
\vspace{-2mm}
\begin{multline}\label{eq:array_resp}
\noindent {\bf{a}}(\phi, \theta)=\frac{1}{\sqrt{N}} [1, . . ,  e^{j k d ((n_z-1) cos (\theta) + (n_y-1) sin (\theta) sin(\phi))}, . . ,  e^{j k d ( (N_z-1)cos(\theta) + (N_y-1) sin(\theta) sin(\phi)}) ]^{'} \ ,
\end{multline}
\vspace{-2mm}
where $d$ is the inter-element spacing (along y and z dimensions) and, $1\leq n_y \leq N_y , 1 \leq n_z \leq N_z$ are element indices. While we use ${\bf{a}}^R(\cdot)$ and ${\bf{a}}^T(\cdot)$ to denote the array response vectors for the Rx and Tx arrays, we will also allude to the array response vectors corresponding to a Rx and Tx subarray, and refer to them as ${\bf{a}}^{R_{SA}}(\cdot)$ and ${\bf{a}}^{T_{SA}}(\cdot)$, respectively.

An equivalent representation of \eqref{mmw_channel} (that we allude to later), considering a single summation over all the $N_r=N_c \times n_r$ rays, is \vspace{-2mm}
\begin{equation}
\label{mmw_channel_2}
{\bf{H}} = \sqrt{N^T_{Ant}N^R_{Ant}}\displaystyle\sum_{r=0}^{N_r-1}{G_{r} {\bf{a}}^R(\phi_{AoA,r} \ , \ \theta_{AoA,r})} {{{{\bf{a}}^T}(\phi_{AoD,r} \ , \  \theta_{AoD,r})}}^* \ .
\vspace{-2mm}
\end{equation}

\vspace{-3mm}
\section{Framework}\label{sec:Framework}
\vspace{-2mm}

\noindent
\emph{Codebook based precoding:} We consider a discrete set (i.e., codebook) of possible beamforming vectors for the Tx (Rx). In particular, each of the vectors in the codebook steers the Tx (Rx) beam towards a certain ($\phi, \theta$) in the (azimuth, elevation) dimension. Notation wise, we have,
\begin{itemize}
\item ${\mathcal{C}}^T_{RF} = \{ (\phi^T_1,\theta^T_1), (\phi^T_2,\theta^T_2), \ldots, (\phi^T_{N^T_{Beams}},\theta^T_{N^T_{Beams}})\}$, is the codebook of RF beams at the Tx, where, $N^T_{Beams}$ is the number of possible Tx RF beams.
\item ${\mathcal{C}}^R_{RF} = \{ (\phi^R_1,\theta^R_1), (\phi^R_2,\theta^R_2), \ldots, (\phi^R_{N^R_{Beams}},\theta^R_{N^R_{Beams}})\}$, is the codebook of RF beams at the Rx, where, $N^R_{Beams}$ is the number of possible Rx RF beams.
\end{itemize}

Each subarray at the Tx (Rx) can pick any of the beams in ${\mathcal{C}}^T_{RF} ({\mathcal{C}}^R_{RF})$. In terms of the RF precoder matrix at the Tx (Rx), this implies that the nonzero vector in each column can be picked to be the vector corresponding to any of the beams in ${\mathcal{C}}^T_{RF} ({\mathcal{C}}^T_{RF})$. For $N^T_{SA}$ subarrays at the Tx, we therefore have ${\bf{F}}^T_{RF} \in {({\mathcal{C}}^T_{RF})}^{N^T_{SA}}$. Similarly, ${\bf{F}}^R_{RF} \in {({\mathcal{C}}^R_{RF})}^{N^R_{SA}}$.

The Tx BB precoder matrix is also picked from a specified codebook of matrices. Specifically, ${\bf{F}}^T_{BB} \in {\mathcal{C}}^T_{BB} = \{{\bf{P}}_1, {\bf{P}}_2, \ldots, {\bf{P}}_{N^T_{BB}}\}$, where, ${\mathcal{C}}^T_{BB}$ is the Tx BB precoder codebook. It consists of $N^T_{BB}$ precoding matrices, with the $i^{th}$ matrix being ${\bf{P}}_i$.

\noindent
\emph{CSI-RS transmission and channel measurements:} 
In current cellular systems (3GPP LTE), the BS transmits a reference symbol from each Tx antenna, so that the Rx antennas can sense the channel from the different Tx antennas without any interference. For mmwave systems, since the Tx and Rx subarrays (the analogues of the Tx and Rx antennas in LTE systems) can beamform in several possible directions, CSI-RS are transmitted from each subarray at the Tx, so as to enable channel measurements corresponding to different beam pair combinations at the Tx and Rx subarrays. In particular, for $N^T_{Beams}$ and $N^R_{Beams}$ number of beams at the Tx and the Rx, resp., a particular Tx subarray transmits $N^T_{Beams} \times N^R_{Beams}$ CSI-RS symbols, with the transmissions from different Tx subarrays orthogonalized in time/freqeuncy to prevent interference. 
After scanning the CSI-RS symbols transmitted by all the Tx subarrays, the Rx can acquire (estimates of) the following channel coefficients: $\{h_{i,j, b_R, b_T}\}$, where $i\in \{1, \ldots, N^R_{SA}\}$ is the Rx subarray index, $j\in \{1, \ldots, N^T_{SA}\}$ is the Tx subarray index, $b_R\in \{1, \ldots, N^R_{beams}\}$ is the beam index picked at the Rx subarray, and, $b_T\in \{1, \ldots, N^T_{beams}\}$ is the beam index picked at the Tx subarray. Using these measurements, the Rx needs to perform a joint optimization of the Tx/Rx RF precoders and the Tx BB precoder, and feed back the Tx RF/BB precoder choices to the Tx for subsequent data transmission. \looseness-1

\vspace{-2mm}
\section{Hybrid Precoder Optimization}
\vspace{-1.5mm}
For every choice of the BB and RF precoder at the Tx and the RF precoder at the Rx, we get an overall compressed channel ${\bf{H}}_{c} =  {{\bf{F}}^{R}_{RF}}^{*} {\bf{H}} {\bf{F}}^{T}_{RF} {\bf{F}}^{T}_{BB}$, so that the transmission equation (3)  becomes $\z={\bf{H}}_{c} \x  + \n$. (The entries of the matrix ${\bf{H}}_c$ can be obtained in a straightforward manner from the set of CSI-RS channel measurements $\{h_{i,j, b_R, b_T}\}$.)

Using the mutual information achieved over this channel as the optimization criterion, the optimization problem is
\vspace{-6mm}
\begin{equation}\label{eq:opt_eq}
\displaystyle \argmax_{{\bf{F}}^T_{BB} \in \mathcal{C}^T_{BB} \ {\bf{F}}^T_{RF} \in ({\mathcal{C}^T_{RF}})^{N^T_{SA}}, {\bf{F}}^R_{RF} \in ({\mathcal{C}^R_{RF}})^{N^R_{SA}}}{\log_2{\det({{\bf {I}}+\frac{1}{\sigma^2} {{\bf{H}}_{c}}^*{\bf{H}}_{c}}}}) \ \ .
\end{equation}

\vspace{-4mm}
\subsection{Complexity}
\vspace{-3mm}

A direct approach to perform the preceding optimization is to evaluate the mutual information for all possible precoder combinations. Since ${\bf{F}}^T_{RF} \in {({\mathcal{C}}^T_{RF})}^{N^T_{SA}}$, there are a total of $(N^T_{Beams})^{N^T_{SA}}$ possible choices for the Tx RF precoder. Similarly, there are a total of $(N^R_{Beams})^{N^R_{SA}}$ possible choices for the Rx RF precoder. Since the BB precoder at the Tx can be picked from amongst $N^T_{BB}$ precoder matrices, the total number of combinations to consider are
\vspace{-2mm}\begin{equation}\label{eq:Complexity}
K =  (N^R_{Beams})^{N^R_{SA}} \times (N^T_{Beams})^{N^T_{SA}} \times N^T_{BB} \  .
\vspace{-7mm}
\end{equation}
\vspace{-3mm}

The total number of combinations scales exponentially with the number of subarrays used at the Tx and the Rx. Even for reasonable values of the system parameters, this exponential scaling makes the number of combinations so large that it severely prohibits an exhaustive search. For instance, with 4 Tx subarrays and 2 Rx subarrays (akin to a $4 \times 2$ MIMO configuration in LTE), and 8 possible beams at the Tx and the Rx (typical numbers that we expect to be used for system design, based on our simulations), we get $K=8^2 \times 8^4 \times N^T_{BB} = 2^{18} \times N^T_{BB} \  ! $ Note that, in contrast, in 3GPP LTE, the Rx needs to optimize over the choice of $N^T_{BB}$ baseband precoders (only), which is already known to make the CSI feedback computation module quite resource intensive \cite{CCNC}. Further, it is important to remark that in a cellular environment, mobility of the receiver would mandate performing this optimization at regular intervals, which threatens the very applicability of hybrid precoding to millimeter wave cellular systems. 

\vspace{-5mm}
\subsection{Reduced Complexity Precoding}
\vspace{-1.5mm}

The major contribution to the high complexity of precoder selection \eqref{eq:Complexity} comes from the selection of the RF beams at the different Tx/Rx subarrays. Towards reducing this complexity, we note that, due to high propagation losses and limited environmental scattering, the mmwave channel is characterized by a small number of (dominant) paths between the Tx and the Rx. This sparse nature of the mmwave channel implies that most of the signal energy is expected to be concentrated around a small set of spatial directions, which opens up the possibility for reducing the RF beam search space, by way of restricting attention to a subset of beams that captures most of the signal energy. In the next section, we present an approach to obtain the desired Tx and Rx codebook subsets of dominant beams, based on a power metric (computed using  the strengths of the CSI-RS channel measurements) that appropriately accounts for the spatial variation of the channel energy across different subarrays while selecting the dominant beam directions. A detailed analysis of the power metric is performed in the limit of large array sizes, and it is shown that, for large systems, the resulting Tx/Rx beam directions approach the channel AoAs/AoDs. Complementary to the preceding approach, in Section \ref{sec:AoA_estimator}, we develop another method for reduced complexity precoding in mmwave systems. Specifically, we exploit the signal correlations across the Rx subarrays to develop an algorithm for estimating the precise channel AoA directions at the Rx, and use these estimated directions as the candidate beams for reception. Since the number of dominant AoA directions is expected to be small, this cuts down on the RF beam search space at the receiver. Simulation results evaluating the performance of the proposed algorithms are provided in Section \ref{sec:Results}.

\vspace{-3mm}
\section{Dominant Beam Selection for Reduced Complexity Precoding}\label{sec:Eff_power}
\vspace{-2mm}

The typical CSI-RS measurement, $h_{i,j,b_R,b_T}$, provides (a noisy estimate of)  the channel between subarray $i$ at the Rx and subarray $j$ at the Tx, when the Rx subarray is steered towards beam index $b_R$ and the Tx subarray is steered towards beam index $b_T$. We are interested in obtaining subsets of Tx and Rx beam codebooks, that capture most of the signal energy. However, note that the signal strengths are not only a function of the Tx and Rx beam indices, but also dependent on the Tx and Rx subarray indices. For instance, a \emph{strong} Rx beam direction at a particular Rx subarray may actually appear to be a \emph{weak} direction when measuring the channel at another Rx subarray. This variation in the signal strength is induced by an (Rx subarray index)-dependent phase variation in the channel corresponding to each of the channel rays, which also in fact depends on the AoA and AoD of each ray. 
To see this, for Tx subarray index $1$, Tx beam index $b_T$, and Rx beam index $b_R$, consider the channel at Rx subarray index $1$. Following the notation in Section \ref{sec:ch_model}, this can be written as (a precise proof is provided in Section \ref{sec:Asymp}, Property 2),\looseness-1
\vspace{-2mm}
\begin{eqnarray}\label{}
h_{1,1,b_R,b_T} = \sqrt{N^T_{Ant_{SA}}N^R_{Ant_{SA}}} \displaystyle\sum_{r=0}^{N_r-1}{G_r {{{\bf{a}}^{R_{SA}}}(\phi^R_{b_R}, \theta^R_{b_R})}^* {\bf{a}}^{R_{SA}}(\phi_{AoA,r}, \theta_{AoA,r})} \\ \nonumber {{{\bf{a}}^{T_{SA}}(\phi_{AoD,r}, \theta_{AoD,r})}^* {\bf{a}}^{T_{SA}}(\phi^T_{b_T}, \theta^T_{b_T})} \ + \ n\ ,
\end{eqnarray}
with  ${\bf{a}}^{R_{SA}} (\cdot, \cdot)$ and ${\bf{a}}^{T_{SA}} (\cdot, \cdot)$ denoting the response vectors corresponding to the Rx and Tx subarrays, resp.

For the same Tx subarray index and the same Tx and Rx beams, the channel, seen at Rx subarray index $i$, is (again, proved in Section \ref{sec:Asymp}, Property 2)
\vspace{-2mm}
\begin{eqnarray}\label{}
h_{i,1,b_R,b_T} = & \sqrt{N^T_{Ant_{SA}}N^R_{Ant_{SA}}} \displaystyle\sum_{r=0}^{N_r-1}{G_r {{{\bf{a}}^{R_{SA}}}(\phi^R_{b_R}, \theta^R_{b_R})}^* [{\bf{a}}^{R_{SA}}(\phi_{AoA,r}, \theta_{AoA,r}) e^{-j\gamma_{r,i}^R}]} \nonumber \\ & \ \ \ \ \ \ \ \ \ {{{{\bf{a}}^{T_{SA}}}(\phi_{AoD,r}, \theta_{AoD,r})}^* {\bf{a}}_{T_{SA}}(\phi^T_{b_T}, \theta^T_{b_T})} \ + \ n \ ,
\end{eqnarray}
where $\gamma_{r,i}^R=k (d_z^{R_i} cos(\theta_{AoA,r}) + d_y^{R_i} sin(\theta_{AoA,r})sin(\phi_{AoA,r}))$, with $d_y^{R_i}$ and $d_z^{R_i}$ denoting the distances (along the $y$ and $z$ dimensions) between Rx subarray $1$ and Rx subarray $i$.

The phase variation induced by the term $e^{j k (d_z^{R_i} cos(\theta_{AoA,r}) + d_y^{R_i} sin(\theta_{AoA,r})sin(\phi_{AoA,r}))}$, captures the relative variation in the response of the Rx subarray $i$ and response of the Rx subarray $1$, to an incoming ray at $(\phi_{AoA,r}, \theta_{AoA,r})$. Note that this phase variation depends on $(\phi_{AoA,r}, \theta_{AoA,r})$, so that for channels with multiple ($>1$) rays, each of the terms inside the summation is impacted in a different manner. Consequently, for identical Tx subarray and identical Tx and Rx beams, the magnitude of the channel coefficient measured across different Rx subarrays can vary significantly, depending on the channel AoAs, AoDs, and subarray sizes. Indeed, the phase variations induced across the spatially separated Rx  subarrays appear analogous to the phase variation induced by a non-zero Doppler frequency shift in a time-varying channel. (For a Doppler frequency shift $f_D$, the phase variation (over time) in the channel contribution of an incoming ray at an angle $\alpha$ is captured by the multiplicative term $e^{j 2 \pi f_D sin(\alpha)t}$.) Note that while we have discussed the channel strength variations across Rx subarray indices, similar arguments hold when considering the variations in the channel strength corresponding to different Tx subarrays.

\vspace{-5mm}
\subsection{Dominant Beam Selection Metric} \label{sec:BS_metric}
\vspace{-2mm}
In light of the preceding discussion, since the strength of the channel coefficients, for given Tx/Rx beams can vary across the different Tx/Rx subarrays, we obtain the Tx/Rx beam subsets (to which we wish to restrict our search space) as the sets of beams that maximize an appropriately chosen \emph{average} signal strength measure. Specifically, we use the following approach: To obtain the average signal strength for a particular Rx beam direction, we average the received signal strength across all Tx/Rx subarrays, and all Tx beams, while keeping the Rx beam direction fixed to the desired direction. Mathematically, we compute the following \emph{effective}-power metric, for each of the Rx beams {\footnote{{Note that, we directly use the (noisy) CSI-RS channel measurement $h_{i,j,l,b_T}$ for computing the power metric. This noisy measurement indeed represents the maximum likelihood (ML) estimate of the true channel coefficient. Alternate estimates, such as the minimum mean squared error estimate, could possibly be employed in lieu of this ML estimate.}}},
\vspace{-2mm}
\begin{equation}\label{eq:effpower_rx}
P^R_{eff}(l) = \frac{\displaystyle\sum_{i=1}^{N^R_{SA}}\sum_{j=1}^{N^T_{SA}}\sum_{b_T=1}^{N^T_{Beams}}|h_{i,j,l,b_T}|^2 }{(N^R_{SA} \times N^T_{SA} \times N^T_{Beams})} \ \ , \ \ l \in \{1,2, \ldots, N^R_{beams}\} .
\end{equation}

A similar procedure is performed corresponding to the Tx beams. Specifically, we compute the effective powers for the Tx beams, as,
\vspace{-2mm}
\begin{equation}\label{eq:effpower_tx}
P^T_{eff}(k) = \frac{\displaystyle\sum_{i=1}^{N^R_{SA}}\sum_{j=1}^{N^T_{SA}}\sum_{b_R=1}^{N^R_{Beams}}|h_{i,j,b_R,k}|^2 }{(N^R_{SA} \times N^T_{SA} \times N^R_{Beams})} \ \ , \ \ k \in \{1,2, \ldots, N^T_{beams}\} .
\vspace{-2mm}
\end{equation}

To reduce the search space complexity, we then pick the best $P$ Rx beams (with largest effective powers) and best $P$ Tx beams (with largest effective powers), and perform a search over only these subsets. The parameter $P$ can be considered as a tunable parameter, which can be picked in accordance with the tolerable search complexity. Crucially, though, given the sparse nature of the mmwave channel, we expect that a small value of $P$ would give performance close to that achieved with an exhaustive search over all beams.

The total number of combinations to consider, under our approach, is
\vspace{-3mm}
\begin{equation}\label{KpEqn}
K_P =  P^{N^R_{SA}} \times P^{N^T_{SA}} \times N^T_{BB}.\vspace{-2mm}
\end{equation}
For small values of P, which will be the paradigm of interest in mmwave communication, we expect $K_P < < K$ (cf. \eqref{eq:Complexity}), resulting in significant complexity savings. Note that, while the search space still scales exponentially in the number of Tx/Rx subarrays, the procedure affords us the flexibility to \emph{systematically} allocate the beams from the reduced subsets across the different subarrays in an optimal manner (given the reduced codebook subsets, we maximize the mutual information over all possible precoding combinations), as opposed to alternate heuristic allocations.\looseness-1 

\emph{Complexity of dominant beam selection}: To obtain the effective power for a particular Rx beam direction, we need to perform $O(N^R_{SA} \times N^T_{SA} \times N^T_{Beams})$
computations. Since there are $N^R_{Beams}$ beams at the Rx, the total number of computations to be performed for selecting the dominant beams at the Rx are $O(N^R_{SA} \times N^T_{SA} \times N^R_{Beams} \times N^T_{Beams})$. A similar number of computations are required to obtain the dominant beams at the Tx, so the total number of computations are of the same order. Hence, with the proposed method, the complexity of obtaining the dominant beams at the Tx and the Rx scales linearly with each of the parameters.

In the proposed method, we restrict the search space to beams that capture most of the channel energy, when averaged across the spatial domain. Note that, for finite antenna array sizes, it is not apparent if a metric other than the one proposed (average spatial power) can perform better. In particular, a closed form analytical expression for the beam directions that maximize the channel's mutual information appears intractable (channel AoA/AoD directions do not necessarily optimize the mutual information). In fact, even for a system with 1 Tx subarray and 1 Rx subarray, given a channel with multiple ($>1$) rays, obtaining a closed form analytical expression for the best Tx/Rx beam seems infeasible. 

However, as we consider larger antenna array sizes, focussing attention on directions in the vicinity of the channel AoAs/AoDs appears to be an intuitively plausible strategy (since the "beamwidth" is reduced with increasing array size). Next, we perform a detailed analysis of the proposed effective power metric, and show that, for large systems, the beams that are selected based on the effective power metric indeed approach the channel AoA/AoD directions.\looseness-1 

\vspace{-5mm}
\subsection{Large System Analysis}\label{sec:Asymp}
\vspace{-3mm}

Here, we analyze the effective power metric based beam selection procedure in the large system limit. (Each subarray consists of a large number of antenna elements, and the number of subarrays is large). Our main result in this section is stated in Lemma \ref{lemma:main}. Before proceeding to the main result, we first state and prove the following intermediate results of interest.

\begin{property}\label{prop:dot_p}
For a uniform planar array with $N_y$ and $N_z$ elements along $y$ and $z$ dimensions (with $N=N_y N_z$) and array response denoted by ${\bf{a}}(\phi, \theta)$ (as in \eqref{eq:array_resp}), we have
\vspace{-4mm}
\begin{multline}
\sqrt{N} {{\bf{a}}^*(\phi_1, \theta_1)}{{\bf{a}}(\phi_2, \theta_2)}
= \begin{cases}
\sqrt{N} &\text{, if $(\phi_1=\phi_2)$ and $(\theta_1 = \theta_2)$}\\
\sqrt{\frac{N_z}{N_y}}g_2(\phi_1, \theta_1, \phi_2, \theta_2) &\text{, if $(\phi_1 \neq \phi_2)$ and $(\theta_1 = \theta_2)$}\\
\frac{1}{\sqrt{N}} g_1(\theta_1, \theta_2) g_2(\phi_1, \theta_1, \phi_2, \theta_2) &\text{, else}\\
\end{cases}
\end{multline}
where $g_1(\theta_1, \theta_2)=\frac{1-e^{j k d N_z (cos(\theta_1)-cos(\theta_2))}}{1-e^{j k d (cos(\theta_1)-cos(\theta_2))}}$, $g_2(\phi_1, \theta_1, \phi_2, \theta_2)=\frac{1-e^{j k d N_y (sin(\theta_1)sin(\phi_1)-sin(\theta_2)sin(\phi_2))}}{1-e^{j k d (sin(\theta_1)sin(\phi_1)-sin(\theta_2)sin(\phi_2))}}$.
\end{property}

\begin{proof}
See Appendix.
\end{proof}

Note that, when ($\theta_1 \neq \theta_2$), we have the following upper bound (that is independent of $N$)
\vspace{-2mm}
\begin{equation}\label{eq:ub1}
|g_1(\theta_1, \theta_2)| < \frac{2}{|1-e^{j k d (cos(\theta_1)-cos(\theta_2))}|}  \ ,
\end{equation}
and, when, ($\theta_1 \neq \theta_2$) or ($\phi_1 \neq \phi_2)$, we have
\vspace{-2mm}
\begin{equation}\label{eq:ub2}
|g_2(\phi_1, \theta_1, \phi_2, \theta_2)| <  \frac{2}{|1-e^{j k d (sin(\theta_1)sin(\phi_1)-sin(\theta_2)sin(\phi_2))}|} \ .
\end{equation}

\begin{property}\label{prop:dot_p_exp}
The CSI-RS channel measurement $h_{i,j,b_R,b_T}$ 
can be written as
\vspace{-3mm}
\begin{multline}\label{eq:CSI-RS_meas}
h_{i,j,b_R,b_T} =
\displaystyle\sum_{r=0}^{N_r-1} e^{-j(\gamma_{r,i}^R+\gamma_{r,j}^T)}G_{r} \left[\sqrt{N^R_{Ant_{SA}}} \ {{\bf{a}}^{R_{SA}}(\phi_{AoA,r}, \theta_{AoA,r})}^* {\bf{a}}^{R_{SA}}(\phi^R_{b_R}, \theta^R_{b_R})\right] \\
\left[\sqrt{N^T_{Ant_{SA}}} \ {{\bf{a}}^{T_{SA}}(\phi_{AoD,r}, \theta_{AoD,r})}^* {\bf{a}}^{T_{SA}}(\phi^T_{b_T}, \theta^T_{b_T})\right]    + n \ .
\end{multline}
where $\gamma_{r,i}^R=k (d_z^{R_i} cos(\theta_{AoA,r}) + d_y^{R_i} sin(\theta_{AoA,r})sin(\phi_{AoA,r}))$, with $d_y^{R_i}$ and $d_z^{R_i}$ denoting the distances (along the $y$ and $z$ dimensions) between Rx subarray 1 and Rx subarray $i$. ($\gamma_{r,j}^T$ is defined in an analogous manner.)
\end{property}

\begin{proof} See Appendix.
\end{proof}
Note that, for a given ray index $r$, the products within the parentheses in \eqref{eq:CSI-RS_meas} measure the projection of the Rx (Tx) subarray's beam on to the channel AoA (AoD) direction, while the phase rotations $e^{-j\gamma_{r,i}^R}$ ($e^{-j\gamma_{r,j}^T}$) capture the variation in measurements across different Rx (Tx) subarrays. \looseness-1

Next, we proceed to the main result of this subsection, related to the large system properties of the effective power metric.
\begin{lemma}\label{lemma:main}
In the large system limit: (a) the effective power \eqref{eq:effpower_rx} measured in a channel AoA direction dominates the effective power measured in a non-AoA direction, and, (b) the effective powers for the different AoA directions are sorted in the order of the channel gains along the different AoAs. Analogous results hold for the effective power measurements \eqref{eq:effpower_tx} in the departure directions.\looseness-1
\end{lemma}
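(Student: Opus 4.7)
\vspace{2mm}
\noindent
\textbf{Proof proposal for Lemma \ref{lemma:main}:}
The plan is to substitute the decomposition of $h_{i,j,l,b_T}$ from Property \ref{prop:dot_p_exp} into the definition \eqref{eq:effpower_rx}, expand $|h_{i,j,l,b_T}|^2$ as a double sum over ray indices $(r,r')$, interchange the order of summation with the averages over $(i,j,b_T)$, and then invoke Property \ref{prop:dot_p} together with the bounds \eqref{eq:ub1}--\eqref{eq:ub2} to show that only the ``diagonal'' terms $r=r'$ whose Rx beam matches the AoA direction of that ray survive in the large system limit.

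Concretely, writing $A_r^R(l) := \sqrt{N^R_{Ant_{SA}}}\,{{\bf{a}}^{R_{SA}}(\phi_{AoA,r},\theta_{AoA,r})}^*{\bf{a}}^{R_{SA}}(\phi^R_l,\theta^R_l)$ and analogously $A_r^T(b_T)$, Property \ref{prop:dot_p_exp} gives
$h_{i,j,l,b_T} = \sum_{r} e^{-j(\gamma^R_{r,i}+\gamma^T_{r,j})} G_r A_r^R(l) A_r^T(b_T) + n$. Squaring and summing over $(i,j,b_T)$ produces a double sum over $(r,r')$ with the three ``array factors''
$\Phi^R(r,r')=\sum_i e^{-j(\gamma^R_{r,i}-\gamma^R_{r',i})}$,
$\Phi^T(r,r')=\sum_j e^{-j(\gamma^T_{r,j}-\gamma^T_{r',j})}$, and
$\Xi^T(r,r')=\sum_{b_T} A_r^T(b_T)\overline{A_{r'}^T(b_T)}$.
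For $r=r'$ these three factors scale respectively as $N^R_{SA}$, $N^T_{SA}$, and (by Property \ref{prop:dot_p}) are dominated by the contribution of the Tx codebook beam aligned with $(\phi_{AoD,r},\theta_{AoD,r})$, which gives $\Xi^T(r,r)\sim N^T_{Ant_{SA}}$. For $r\neq r'$ with distinct AoAs/AoDs, $|\Phi^R(r,r')|$ and $|\Phi^T(r,r')|$ are $o(N^R_{SA})$ and $o(N^T_{SA})$ respectively (sums of randomly-phased exponentials across spatially separated subarrays), and $|\Xi^T(r,r')|$ is bounded via \eqref{eq:ub1}--\eqref{eq:ub2} by an $N$-independent constant. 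Hence the off-diagonal contribution is of strictly lower order than the diagonal one.

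The diagonal contribution reduces to
$\frac{N^R_{Ant_{SA}}}{N^T_{Beams}}\sum_{r}|G_r|^2\,|A_r^R(l)|^2\,\Xi^T(r,r)/N^R_{Ant_{SA}}$
(modulo lower-order terms). By Property \ref{prop:dot_p}, $|A_r^R(l)|^2=N^R_{Ant_{SA}}$ when $(\phi^R_l,\theta^R_l)$ equals the AoA of ray $r$, and is at most a bounded constant otherwise. Consequently, if $l$ is a non-AoA direction, $P^R_{eff}(l)$ is $O(1)$, whereas if $l$ coincides with an AoA direction $(\phi^*,\theta^*)$ the corresponding term grows like $N^R_{Ant_{SA}}$ and therefore dominates in the large system limit, giving part (a). Collecting the rays sharing a common AoA, part (b) follows because the dominant term of $P^R_{eff}(l)$ at AoA direction $(\phi^*,\theta^*)$ is proportional to $\sum_{r:\,AoA_r=(\phi^*,\theta^*)}|G_r|^2\,\Xi^T(r,r)$, which (since $\Xi^T(r,r)$ concentrates at $N^T_{Ant_{SA}}$ independently of the ray) orders the AoA directions exactly by the aggregate channel gain along them. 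The argument for $P^T_{eff}$ is symmetric, with the roles of Tx and Rx swapped.

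The main obstacle I anticipate is the rigorous quantification of the off-diagonal cancellation, i.e., showing $|\Phi^R(r,r')|=o(N^R_{SA})$ and $|\Phi^T(r,r')|=o(N^T_{SA})$ for $r\neq r'$ with distinct AoAs/AoDs. This requires a mild non-degeneracy condition on the subarray placement (so that the phases $\gamma^R_{r,i}-\gamma^R_{r',i}$ are not identically aligned across $i$), which can be handled by the same geometric-series argument that underlies the bounds \eqref{eq:ub1}--\eqref{eq:ub2}; a minor secondary issue is properly handling rays that share an AoA or AoD, which is resolved by grouping such rays before taking the limit.
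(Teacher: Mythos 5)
Your proposal reaches the correct conclusions and the same leading-order scalings as the paper, but it handles the cross-ray terms by a genuinely different mechanism. The paper never forms the double sum over ray pairs $(r,r')$: it argues ray by ray, using Property~\ref{prop:dot_p} to show that any ray whose AoA (AoD) is misaligned with the Rx (Tx) beam contributes only $O(1)$ or $O(1/\sqrt{N})$ to $h_{i,j,l,b_T}$ as the per-subarray antenna count $N$ grows, so after squaring, the cross terms are automatically of lower order than the diagonal $N^2$ term; the averaging over subarray indices $(i,j)$ is used only to make the noise power converge to $\sigma^2$ and to annihilate the signal--noise cross terms. You instead kill the off-diagonal ray terms through decorrelation of the phases $\gamma^R_{r,i}-\gamma^R_{r',i}$ across many subarrays (claiming $|\Phi^R(r,r')|=o(N^R_{SA})$), which requires the extra non-degeneracy condition on subarray placement that you yourself flag, and which the paper's route does not need: the beam-sharpening scaling of Property~\ref{prop:dot_p} already suffices. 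Your route is more compact (it avoids the paper's three-case, nine-subcase enumeration) and treats multiple rays sharing an AoA more cleanly, at the price of an additional hypothesis; the paper's route is longer but yields the explicit intermediate expressions \eqref{eq:P_R_Eff1}--\eqref{eq:P_R_Eff3} and makes explicit the probabilistic nature of the comparison (the gains $G_r$ are random, so the dominance in part (a) is established only with probability arbitrarily close to one).

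Two corrections, neither fatal. First, your claim that $P^R_{eff}(l)=O(1)$ for every non-AoA direction is inconsistent with your own estimates: when $(\phi^R_l,\theta^R_l)$ matches a ray's elevation AoA but not its azimuth AoA, Property~\ref{prop:dot_p} gives $|A^R_r(l)|^2=|g_2|^2=O(1)$ rather than $O(1/N)$, so the corresponding diagonal term is $|G_r|^2\cdot O(1)\cdot \Xi^T(r,r)\sim N^T_{Ant_{SA}}$; this is precisely the paper's Case~2, $P^R_{eff}(l)\approx\sigma^2+N\,|G_m|^2\,|g_2(\cdot)|^2$, which grows with $N$ but is still dominated by the $N^2$ scaling of a fully matched direction. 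Second, your expansion silently drops the additive noise; one should note, as the paper does, that averaging over $(i,j,b_T)$ sends the noise contribution to $\sigma^2$ and the signal--noise cross terms to zero, and that the final comparison between the $N^2$, $N$, and $O(1)$ regimes must be phrased probabilistically over the random gains.
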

\begin{proof} See Appendix.
\end{proof}

{\emph{Implications of Lemma 1 for codebook based beam selection:}} Lemma $1$ establishes certain intuitively desirable properties of the effective power metric. As far as Rx (Tx) beam selection is concerned, the impact of Lemma 1 also depends on the actual Rx (Tx) codebooks, since these codebooks determine the directions in which the effective power is actually measured. When the antenna array sizes are large, while it is desirable to have denser RF codebooks (since the "beamwidth" gets narrower), this also increases the CSI-RS overhead, so that there are constraints to how dense the codebooks may be (thereby imposing constraints on whether the effective power would be measured in the AoA/AoD directions or not). Nonetheless, the results of Lemma 1 can be interpreted to infer that, with effective power based beam selection, for large systems, the Rx (Tx) codebook beams that are in the vicinity of the channel AoAs (AoDs) are more likely to be selected compared to other beams. (This is because the effective power can be expected to vary smoothly as a function of the beam direction). In other words, the selected beam directions \emph{approach} the channel AoAs/AoDs. On a related note, it is plausible to consider time division multiplexing of the different (Rx) (Tx) beams (over the channel coherence time) in order to realize denser codebooks without increasing the CSI-RS overhead. This enhances the likelihood that the effective power based beam selection procedure locks on to beams in close vicinity of the channel AoAs (AoDs).

\vspace{-4mm}
\section{AoA Estimation for Reduced Complexity Precoding}\label{sec:AoA_estimator}
\vspace{-2mm}
In this section, we present an algorithm to estimate the \emph{precise} channel AoA directions. In contrast to the preceding section, where the power measurements and dominant beam selection were performed based on the instantaneous channel realization, the AoA estimation algorithm developed here results in arrival directions which dominate in terms of the \emph{average} channel power (i.e., "long-term" power). These (small number of) dominant AoA direction estimates may then be employed as candidate RF beams for data reception at the receiver, thereby reducing the precoder search complexity (See Remark 1 below). Note that while AoA estimation provides complexity reduction w.r.t (with respect to) the Rx beam search, other techniques to estimate dominant beams at the Tx (such as the one in the preceding section) may be utilized in conjunction with this method. 

{\emph{Remark 1:}} In principle, we could consider mapping the estimated AoAs to the corresponding nearest beams in the Rx RF codebook. Here, in Section \ref{sec:Results}, we simulate the performance with precise steering in the AoA directions, to understand the "best" achievable performance with precise AoA estimates. Note that in a strict sense, this would require additional CSI-RS symbols (on top of the CSI-RS used to obtain channel measurements in the Rx RF codebook directions) to enable channel measurements when the receiver beamforms in the AoA directions. However, in practice, the Rx could obtain channel measurements in the (small number of) estimated AoA directions using a subset of the existing CSI-RS, skipping the measurements in a subset of the codebook beam directions (the skipped subset may vary with time, to ensure enough measurements in all directions). (This is particularly feasible here, since the AoA estimator we investigate relies only on the long term statistics of the CSI-RS measurements, which would not be impacted significantly by occasionally skipped measurements.) 

\vspace{-5mm}
\subsection{Principles of AoA Estimation}\label{AoAEstTheorySec}
\vspace{-2mm}

The problem of estimating the signal AoA using an array of Rx antennas has been studied extensively (see, e.g., \cite{KrimVibergTraditionalAoA}, \cite{RoyKailath}, \cite{Paulraj}, \cite{libertiRappaport}). Note, however, that the antenna array architecture assumed in these references is different from that employed in mmwave systems; in particular, access to the individual antenna outputs is assumed. In mmwave systems, we only have access to the beamformed samples, i.e., to a linear combination of the samples observed at the different antenna elements. Due to this, such "classical" AoA estimation methods are not directly applicable, as has been noted in the literature \cite{HuangGua_AoSAoA, PoonAoA, RamasamyAoA}. While \cite{PoonAoA, RamasamyAoA} consider AoA estimation with a single Rx array, \cite{HuangGua_AoSAoA} considers multiple Rx subarrays and provides an AoA estimation method based on correlating the observations across different subarrays, for the scenario of signal arrival from a single direction, in a static environment. In this work, we use similar principles to develop a methodology for signal AoA estimation in scenarios encountered in cellular systems, namely, where the Rx is mobile, and where the presence of multiple scatterers/reflectors in the environment leads to signal propagation and consequent signal arrival at the Rx via a number of paths.\looseness-1

In deriving the expressions in this section, we follow the channel model and notation introduced in Section \ref{sec:ch_model}. As will be shown, our AoA estimator exploits the time averaged correlation between the received signals across different Rx subarrays. Therefore, we now explicitly incorporate the time index $t$. Further, following the typical spatial channel modelling methodology (adopted, e.g., in \cite{IMT_SCM}), each ray within a cluster is characterized by a set of quasi-static quantities: average power, time delay, initial phase, Doppler frequency, and an AoA and AoD. 

Incorporating the preceding parameters, the received CSI-RS sample $h_{i,j,b_R, b_T}(t)$ between Rx subarray $i$ and Tx subarray $j$, with the Rx (Tx) subarray beamforming in the direction with index $b_R$ ($b_T$), can be written as
\vspace{-6mm}
\begin{multline} \label{eq:CSI-RS_meas_t}
h_{i,j,b_R,b_T} (t) = \\
\displaystyle\sum_{r=0}^{N_r-1} e^{-j 2 \pi f \tau_{r} }  | G_{r} | e^{j (\gamma_{r} + k.f_{D, r}.t) } e^{-j(\gamma_{r,i}^R+\gamma_{r,j}^T)}  \left[\sqrt{N^R_{Ant_{SA}}} \ {{\bf{a}}^{R_{SA}}(\phi_{AoA,r}, \theta_{AoA,r})}^* {\bf{a}}^{R_{SA}}(\phi^R_{b_R}, \theta^R_{b_R})\right] \\
\left[\sqrt{N^T_{Ant_{SA}}} \ {{\bf{a}}^{T_{SA}}(\phi_{AoD,r}, \theta_{AoD,r})}^* {\bf{a}}^{T_{SA}}(\phi^T_{b_T}, \theta^T_{b_T})\right]    + n_{i,j,b_R,b_T} \ ,
\end{multline}
where, as before,  $\gamma_{r,i}^R=k (d_z^{R_i} cos(\theta_{AoD,r}) + d_y^{R_i} sin(\theta_{AoD,r})sin(\phi_{AoD,r}))$, with $d_y^{R_i}$ and $d_z^{R_i}$ denoting the distances (along the $y$ and $z$ dimensions) between Rx subarray 1 and Rx subarray $i$. ($\gamma_{r,j}^T$ is defined in an analogous manner.) Here, $f$ denotes the narrowband center frequency, and for the ray with index $r$, $\tau_{r}$, $| G_{r} |$, $\gamma_{r}$, $f_{D, r}$ denote the delay relative to $\tau_{0}$, magnitude, initial phase (i.e., phase at $t=0$), and the Doppler frequency, respectively. (These ray parameters (and the ray AoA and AoD) are understood to be quasi-static.) Note that \eqref{eq:CSI-RS_meas_t} follows from Property 2, wherein we have now also incorporated the time variations in the channel coefficient.

To demonstrate the principles behind our AoA estimation algorithm, we consider an arrangement of one Tx subarray, and three Rx subarrays. The three Rx subarrays are indexed as \{1, 2, 3\}. With respect to Rx subarray 1, Rx subarray 2 is located at a shift of $d_{ySA}$ along the $Y$ axis, while Rx subarray 3 is located at a shift of $d_{zSA}$ along the $Z$ axis with respect to Rx subarray 1. In terms of the notation used to define $\gamma_{r,i}^R$ above, we have ($d_y^{R_2} = d_{ySA}, d_z^{R_2} = 0, d_y^{R_3} = 0, d_z^{R_3} = d_{zSA}$). We will exploit the relationship between the CSI-RS symbols received across the three Rx subarrays to obtain the AoA estimates. We consider a particular choice of the (Tx beam, Rx beam) pair (i.e., particular choice of $(b_R, b_T)$). Since the analysis we perform next holds for all such pairs, to simplify notation in the following, we drop the indices $(b_R, b_T)$. Further, the Tx subarray index is also dropped (since there is only 1 Tx subarray). Specifically, we denote by $z_i(t)$ $\{i=1,2,3\}$, the CSI-RS symbol received at Rx subarray $i$ at time $t$.

Using \eqref{eq:CSI-RS_meas_t}, we can write,
\vspace{-2mm}
\begin{equation}
z_{i}(t) = \sum_{r = 0}^{N_{r} - 1} z_{i, r}(t) + n_{i}(t) \ ,
\label{subarray0RxSampleTopLevelEqn}
\end{equation}
where $z_{i, r}(t)$ is the term inside the summation for ray $r$ in \eqref{eq:CSI-RS_meas_t}.

Considering Rx subarrays 1 and 2, with subarray 2 located at a shift of $d_{ySA}$ along the $Y$ axis w.r.t subarray 1, we have
\vspace{-4mm}
\begin{equation}
z_{2, r}(t) = e^{ j k.d_{ySA} . sin(\theta_{AoA, r}).sin(\phi_{AoA, r}) }.z_{1, r}(t) \ .
\label{subarray1RxSampleRxPhaseEqn}
\end{equation}

Now, consider $E \left\{ z_{2}  (t) . z^{*}_{1}  (t) \right\}$, with the expectation taken over time. Using \eqref{subarray0RxSampleTopLevelEqn}, \eqref{subarray1RxSampleRxPhaseEqn}, we get
\vspace*{-0.2cm}
\begin{flalign}
E \left\{ z_{2}  (t) . z^{*}_{1}  (t) \right\} = \sum_{r = 0}^{N_{r} - 1} | \left( \cdot \right) |^{2} . e^{ j k . d_{ySA} .sin(\theta_{AoA, r}).sin(\phi_{AoA, r}) } + \sum_{ p = 0}^{N_{r} - 1} \sum_{ q = 0; q \neq p}^{N_{r} - 1} E \left\{ C_{p, q} \right\},
\label{subarray10PartialCorrEqn}
\end{flalign}
where the noise samples observed at the two subarrays are assumed uncorrelated with each other, as well as with any of the signal components.
The cross term $C_{p, q}$ is given as
\vspace*{-0.2cm}
\begin{equation}
 C_{p, q}
 = e^{j [ (\gamma_{p} - \gamma_{q}) + k.(f_{D, p} - f_{D, q}).t ] } . g \left( \cdot \right),
\label{subarray10RayCrossTermEqn}
\end{equation}
where $g \left( \cdot \right)$ is independent of time. We note that the quantity $e^{j [ (\gamma_{p} - \gamma_{q}) + k.(f_{D, p} - f_{D, q}).t ] }$ is a unit-magnitude phasor with an initial phase, rotating in time; hence, we have $E \left\{ e^{j [ (\gamma_{p} - \gamma_{q}) + k.(f_{D, p} - f_{D, q}).t ] } \right\}  = 0$, due to which all the cross terms in (\ref{subarray10PartialCorrEqn}) vanish, leaving us with only the first (time-independent) term; hence we have
\vspace{-2mm}
\begin{flalign}
& E \left\{ z_{2}  (t) . z^{*}_{1}  (t) \right\} = \sum_{ r = 0}^{N_{r} - 1} | \left( \cdot \right) |^{2} . e^{ j k . d_{ySA} .sin(\theta_{AoA, r}).sin(\phi_{AoA, r}) } \ .
\label{subarray10CorrEqn}
\end{flalign}

\vspace{-0.2cm}
Proceeding in an analogous manner and considering Rx subarray 3 displaced by $d_{zSA}$ w.r.t Rx subarray 1 along the $Z$ axis, we have the counterpart expressions to (\ref{subarray1RxSampleRxPhaseEqn}, \ref{subarray10CorrEqn}) as
\vspace{-0.2cm}
\begin{equation}
z_{3, r}(t) = e^{ j k.d_{zSA} . cos(\theta_{AoA, r}) }.z_{1, r}(t) \ ,
\label{subarray2RxSampleRxPhaseEqn}
\end{equation}
\vspace{-0.8cm}
\begin{flalign}
& E \left\{ z_{3}  (t) . z^{*}_{1}  (t) \right\} = \sum_{ r = 0}^{ N_{r} - 1} | \left( \cdot \right) |^{2} . e^{ j k . d_{zSA} .cos(\theta_{AoA, r}) } \ .
\label{subarray20CorrEqn}
\end{flalign}
\vspace{-0.5cm}

Now, working under the assumption that, in \eqref{subarray10CorrEqn}, \eqref{subarray20CorrEqn}, a set of rays with a small angle spread clustered around a certain elevation and azimuth angle dominates the power distribution across the rays (essentially assuming that, \emph{once we have beamformed at the Tx and Rx subarrays}, rays within a small angular spread contribute a dominant fraction of the post-beamforming captured power), we can approximate \eqref{subarray10CorrEqn} and \eqref{subarray20CorrEqn} as the product of an amplitude-only term and a phase term, as follows\looseness-1
\vspace{-5mm}
\begin{equation}
 E \left\{ z_{2}  (t) . z^{*}_{1}  (t) \right\}
 \approx |(\cdot)|^{2} . e^{ j k . d_{ySA} . sin(\theta_{AoAEff}) . sin(\phi_{AoAEff}) } \ ,
\label{subarray10Corr1DSimplifiedEqn}
\end{equation}
\vspace*{-0.6cm}
\vspace{-2mm}
\begin{equation}
 E \left\{ z_{3}  (t) . z^{*}_{1}  (t) \right\}
 \approx |(\cdot)|^{2} . e^{ j k . d_{zSA} . cos(\theta_{AoAEff}) } \ ,
\label{subarray20Corr1DSimplifiedEqn}
\end{equation}
\vspace*{-0.2cm}
where $\theta_{AoAEff}$ and $\phi_{AoAEff}$ are the effective elevation and azimuth signal angles-of-arrival, representing the (closely clustered) arrival angles of the dominant rays. Given the expected correlations in \eqref{subarray20Corr1DSimplifiedEqn} and \eqref{subarray10Corr1DSimplifiedEqn} (which can be estimated based on received CSI-RS samples, as discussed next), we can estimate $\theta_{AoAEff}$  via \eqref{subarray20Corr1DSimplifiedEqn}, followed by the estimation of $\phi_{AoAEff}$ via \eqref{subarray10Corr1DSimplifiedEqn}. (Note that the preceding analysis holds for a particular choice of Tx-Rx RF beam pair. For different beam pairs, the closely clustered set of rays that dominate the power distribution in \eqref{subarray10CorrEqn}, \eqref{subarray20CorrEqn} is expected to be different, enabling estimation of different effective arrival directions.) As noted earlier, the estimated AoAs should be understood to be longer-term dominant AoAs. \looseness-1

The observation that the signal AoAs may be obtained from the phase of the subarray cross-correlations forms the basis for our AoA estimation procedure, described next.

\vspace{-4.5mm}
\subsection{AoA Estimation Procedure} \label{sec:aoaProcedureSec}
\vspace{-2mm}
As discussed in Section~\ref{sec:Framework}, the BS periodically transmits CSI-RS symbols from its subarrays via the RF beams from the codebook ${\mathcal{C}}^T_{RF} = \{ (\phi^T_1, \theta^T_1), \ldots, ( \phi^T_{N^T_{Beams}}, \theta^T_{N^T_{Beams}}) \}$. The MS receives these transmissions via the RF beams from the codebook ${\mathcal{C}}^R_{RF} = \{( \phi^R_1, \theta^R_1), \ldots, (\phi^R_{N^R_{Beams}}, \theta^R_{N^R_{Beams}}) \}$. Following the previous subsection, the AoA estimation procedure we describe now, considers a particular Tx subarray and three Rx subarrays, with Rx subarray 2 displaced $d_{ySA}$ along the $Y$ axis w.r.t Rx subarray 1, and Rx subarray 3 displaced $d_{zSA}$ along the $Z$ axis w.r.t Rx subarray 1. Though the description is w.r.t these subarrays, the procedure may be concurrently repeated for any such combination of three Rx and one Tx subarrays from among the Rx and Tx subarrays, with additional combining of the resulting estimates (e.g., for noise averaging). 
The steps in the AoA estimation procedure are as follows:

\begin{enumerate}
\item Step 1: Each CSI-RS transmission from the BS subarray is received via identical RF beam directions at the three MS subarrays; this ensures the validity of the assumptions in deriving ~(\ref{subarray1RxSampleRxPhaseEqn}, \ref{subarray10CorrEqn})~and~(\ref{subarray2RxSampleRxPhaseEqn}, \ref{subarray20CorrEqn}). (The Tx-Rx beam pair varies over different CSI-RS transmissions, as per a specified CSI-RS schedule.)
\item Step 2: Based on the received CSI-RS symbols, for every Tx-Rx RF beam pair $\{( \phi^T_j, \theta^T_j ), \\ ( \phi^R_k, \theta^R_k ) \}$, the MS calculates the quantities $P_{Avg}\left(j, k\right) = \frac{1}{M} \sum\limits_{l = 0}^{M-1} |z_{1}(t_l)|^2$, $C_{2, 1}\left(j, k \right) = \frac{1}{M} \sum\limits_{l = 0}^{M-1} z_{2}(t_l) \cdot z^{*}_{1}(t_l)$ and $C_{3, 1}\left(j, k \right) = \frac{1}{M} \sum\limits_{l = 0}^{M-1} z_{3}(t_l) \cdot z^{*}_{1}(t_l)$. Here, the time indices $t_l, l = 1, \ldots, M$ represent $M$ CSI-RS transmission instances, while $z_{1}(t_l)$, $z_{2}(t_l)$ and $z_{3}(t_l)$ represent the observed signals at time $t_l$ at receive subarrays 1, 2 and 3, respectively. The quantities $P_{Avg}(\cdot, \cdot)$, $C_{2, 1}(\cdot, \cdot)$ \& $C_{3, 1}(\cdot, \cdot)$ represent empirical calculations of the average received power at Rx subarray 1 (and Rx subarray 2 or 3) and the subarray cross-correlations.\looseness-1
\item Step 3: The largest $P$ values of $P_{Avg}(\cdot, \cdot)$ are identified and arranged in decreasing order; the $m^{th}$ sorted value is denoted as $P^{'}_{Avg}\left(j_m, k_m\right)$, with $j_m$ and $k_m$ denoting the corresponding Tx and Rx RF beam indices, respectively. 
\item Step 4: Motivated by (\ref{subarray20Corr1DSimplifiedEqn} , \ref{subarray10Corr1DSimplifiedEqn}), and making use of the empirically calculated subarray cross-correlations, the estimate of the elevation and azimuth AoAs for the $m^{th}$ arrival direction are then calculated as
\vspace{-2mm}
\vspace{-2mm}
\begin{equation}
\hat{\theta}_{AoA, m} = cos^{-1}\left( \frac{ \arg\left[ C_{3, 1}( j_m, k_m)\right]   }{   \frac{2 \pi}{ \lambda} d_{zSA} }  \right) \ , \
\hat{\phi}_{AoA, m} = sin^{-1}\left( \frac{ \arg\left[ C_{2, 1}( j_m, k_m)\right]   }{   \frac{2 \pi}{ \lambda} d_{ySA} sin(\hat{\theta}_{AoA, m})  }  \right) \ ,
\label{aoaEstEqn}
\end{equation}
where $\arg(\beta)$ represents the phase of the complex number $\beta$.
\begin{itemize}
\item Note 1: In practice, the calculated phases $\arg[C_{3, 1}(j_m, k_m)]$ \& $\arg[C_{2, 1}(j_m, k_m)]$ would be wrapped-around versions (to the range $[-\pi, \pi]$) of their actual values. It is crucial to unwrap these phases before using \eqref{aoaEstEqn}. This is discussed in the next paragraph.\looseness-1
\item Note 2: From (\ref{aoaEstEqn}), it is seen that only AoAs within the visible region of the receive subarrays, i.e., $\theta_{AoA} \in \left[ 0, \pi \right]$ \& $\phi_{AoA} \in \left[ -\frac{\pi}{2}, \frac{\pi}{2}\right]$, may be estimated unambiguously. This is reasonable in practice since AoAs outside this range would correspond to being "behind" the subarrays, and would be covered by other subarrays.
\end{itemize}
\end{enumerate}

{Unwrapping the phase:} Consider the relationship between the parameter $\delta$ and $\arg(e^{j\delta})$, where $\arg(e^{j\delta}) \in [-\pi, \pi]$. Assuming that $\delta \in [(2\alpha-1)\pi, (2\alpha+1)\pi]$ for some integer $\alpha$, we have $\arg(e^{j\delta}) = \delta - (2\alpha)\pi$, so that, $\delta = \arg(e^{j\delta}) +  (2\alpha)\pi$. Note that to use this relationship for obtaining $\delta$, given $\arg(e^{j\delta})$, one must know $\alpha$ to begin with. For our problem, considering \eqref{subarray20Corr1DSimplifiedEqn}, we need to obtain $k . d_{zSA} . cos(\theta_{AoAEff})$, based on $\arg(e^{j k . d_{zSA} . cos(\theta_{AoAEff}))}$ (which is estimated empirically as $\arg[C_{3, 1}(j_m, k_m)]$). Assuming that $k . d_{zSA} . cos(\theta_{AoAEff}) \in [(2\alpha-1)\pi, (2\alpha+1)\pi]$ for some integer $\alpha$, we could unwrap $\arg(k . d_{zSA} . cos(\theta_{AoAEff}))$ by adding $(2\alpha)\pi$ to it. While we do not know the $\alpha$ {\it{a priori}}, (we do not know $\theta_{AoAEff}$ {\it{a priori}}, as that is the quantity being estimated), a good choice to obtain $\alpha$ is to use the elevation reception angle of the Rx codebook beam under consideration (i.e., to use Rx RF codebook angle $\theta^R_{k_m}$)  in lieu of the unknown $\theta_{AoAEff}$. (From the discussion surrounding \eqref{subarray20Corr1DSimplifiedEqn}, we expect that when the Rx beamforms in the direction of beam index $k_m$, the estimated elevation AoA would correspond to arrival angles in the vicinity of $\theta^R_{k_m}$). While we describe unwrapping for elevation AoA estimation, once the elevation AoA is estimated, a similar procedure can be employed to unwrap the measured phase for azimuth AoA estimation.\looseness-1

{\emph{Performance evaluation:}} Due to space constraints, we skip results on AoA estimator's direct performance evaluation (we show throughput results with RF beam selection based on the estimated AoAs in Section \ref{sec:Results}). We do, however, remark that while prior work on AoA estimation in the array-of-subarrays context \cite{HuangGua_AoSAoA} considered the scenario of a single signal arrival angle, wherein the estimated AoA could be compared directly with this arrival angle, such an approach to evaluate the AoA estimator's performance will yield insufficient insight when we have a distribution of the channel energy among the signals arriving from multiple directions, as considered here. Consequently, to evaluate the performance of the AoA estimator, it is important to define and identify a set of "true" arrival directions, and then compare the estimated AoAs with these "true" AoAs, using a suitably defined metric. We considered one such metric, and verified the performance of the AoA estimator w.r.t this metric (over channel model \cite{IMT_SCM}). Details related to this study can be provided by the authors upon request.\looseness-1

{\emph{Emulating the output of the AoA estimator:}} In light of our AoA estimation procedure, we now make note of a method to emulate the AoA estimator in simulations that require AoA estimates for study of other algorithms. (E.g., we need AoA estimates to study the throughput performance of reduced complexity precoding. The following method allows us to emulate the output of the AoA estimator for this study, reducing the overall simulation run-time considerably). As mentioned in Step~2, the quantities $P_{Avg}(\cdot, \cdot)$, $C_{2, 1}(\cdot, \cdot)$ \& $C_{3, 1}(\cdot, \cdot)$, from which the AoA estimates are derived, represent empirical calculations of $E\left( |z_{1}(t) |^2 \right)$, $E \left\{ z_{2}  (t) . z^{*}_{1}  (t) \right\}$ \& $E \left\{ z_{3}  (t) . z^{*}_{1}  (t) \right\}$ respectively, and will converge to them (respectively) over a reasonably large number of CSI-RS transmissions. However, we note that for a particular channel realization, i.e., for a particular choice of the parameters $N_r$, and $\left\{ \tau_{r}, | G_{r} |, \gamma_{r}, f_{D, r}, \phi_{AoA, r}, \phi_{AoD, r}, \theta_{AoA, r}, \theta_{AoD, r}  \right\}$, $r = 0, \ldots, N_r-1$, the values of $E\left( |z_{1}(t) |^2 \right)$, $E \left\{ z_{2}  (t) . z^{*}_{1}  (t) \right\}$ \& $E \left\{ z_{3}  (t) . z^{*}_{1}  (t) \right\}$ can be calculated analytically. Hence, while running simulations, instead of the time-intensive technique of calculating $P_{Avg}(\cdot, \cdot)$, $C_{2, 1}(\cdot, \cdot)$ \& $C_{3, 1}(\cdot, \cdot)$ by time evolution of a channel realization, we can alternatively replace them (in Step~2) by the analytically calculated values that they converge to (we verified the convergence in simulations over \cite{IMT_SCM}.). The other steps in the procedure then yield the AoA estimates. In Section \ref{sec:Results}, we employ this emulation technique for our link throughput simulation studies requiring the use of AoA estimates for the purpose of RF beam selection.\looseness-1

Finally, we remark on the complexity of the described AoA estimation procedure. Since the average power in Step 2 of the procedure is computed for each beam pair, the complexity scales as $O(N^R_{Beams} \times N^T_{Beams})$. Note however, that the procedure was described using an exemplary set consisting of one Tx subarray and three Rx subarrays, although multiple such sets may be employed for improved performance. The precise complexity scaling w.r.t. the number of Tx/Rx subarrays would therefore be specific to the implementation.

\vspace{-3mm}
\section{Simulation Results}\label{sec:Results}
\vspace{-2mm}
We conducted simulations for transmission in the azimuth plane using uniform linear arrays, with inter-element spacing $d=\lambda/2$. The BS array consists of 16 antennas (divided into two independently steerable subarrays with 8 antennas each), while the MS array consists of 8 antennas (again, divided into two independently steerable subarrays with 4 antennas each). This set-up emulates a 2x2 conventional MIMO system. The BS sector spans 120 degrees around boresight, while the MS monitors a complete 180 degree region around boresight. The BS RF codebook consists of 12 beams spread uniformly in the sector, while the RF codebook at the MS consists of 8 uniformly spread beams. (Based on our simulations, for the configuration we consider, these are nominal number of beams). The baseband precoder is assumed to come from the 2x2 codebook used in the LTE standard \cite{36_211}. The channel model employed is the non-line-of-sight IMT 4G urban micro (UMi) \cite{IMT_SCM} spatial channel model, a cluster-ray based model as in \eqref{mmw_channel} (outdoor mmwave statistical channel models are not yet available; and can be expected to be sparser than the model considered here). An OFDM based implementation is considered: the OFDM numerology is based on the design specified in \cite{Farooq_WCNC} (see page 58), with FFT size of 128. The performance metric used is the mutual information (averaged over several channel realizations) attained with the different precoding schemes, considering only the center subcarrier in the OFDM system (approximating a narrowband transmission).\looseness-1

\begin{figure}[t]
\vspace{-6mm}
\hspace{-30 mm}
\includegraphics[width=210mm, height=99mm]{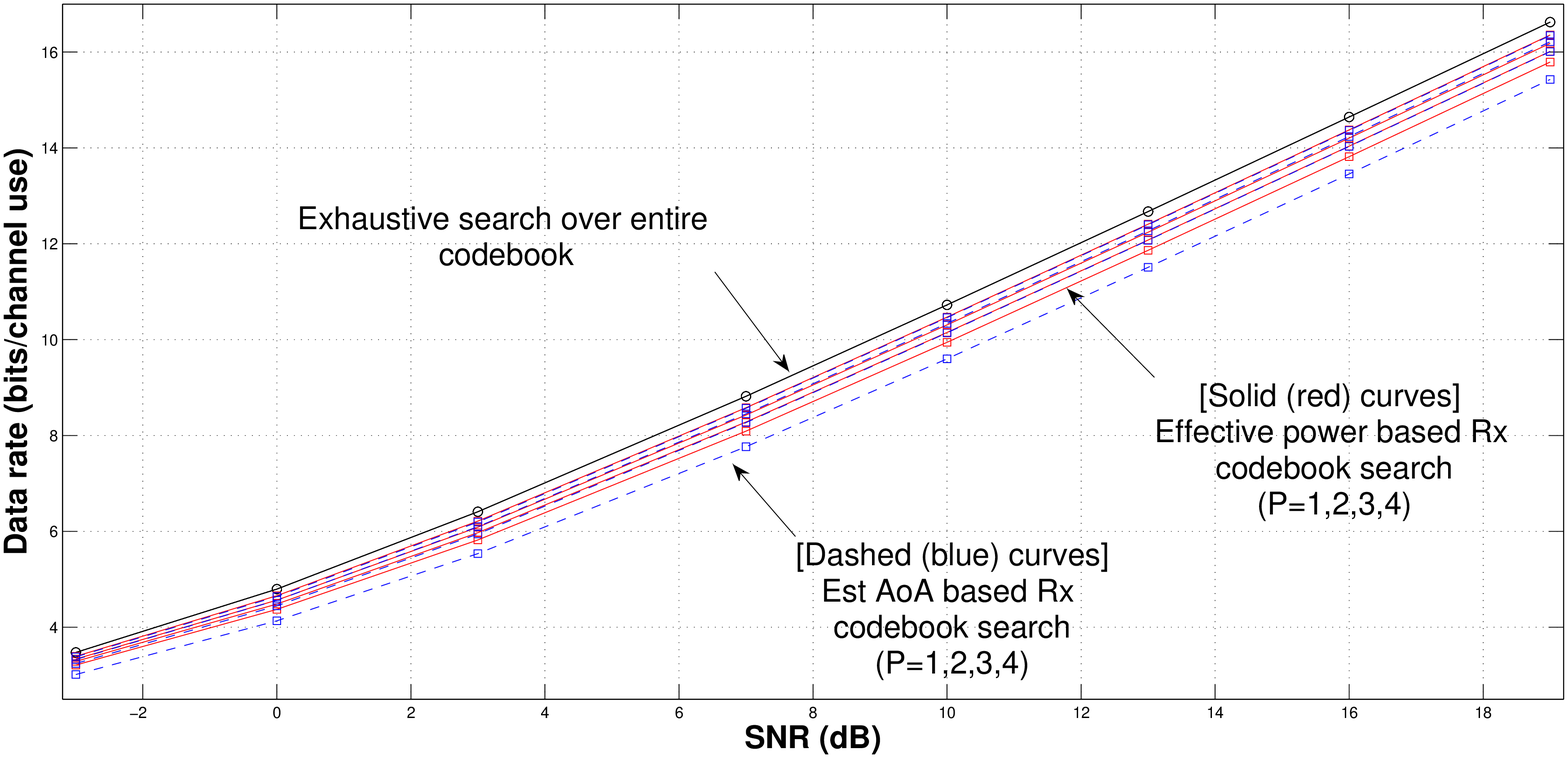}
\vspace{-13mm}
\caption{Performance of the effective power approach, AoA estimation approach, and exhaustive search.}
\label{fig:AoA_vs_heuristic}
\vspace{-8mm}
\end{figure}

In Fig. \ref{fig:AoA_vs_heuristic}, we depict the performance of the two proposed approaches: AoA estimation based complexity reduction (in MS RF beam search) and effective power based complexity reduction (applied in MS RF beam search only, to enable comparison with the AoA based approach), and the performance with an exhaustive search over the MS RF codebook. Solid (red) curves correspond to the effective power approach (for different values of the parameter \emph{P}), while dashed (blue) curves correspond to the AoA approach (for different values of the parameter \emph{P}). It is observed that both the proposed approaches provide significant complexity savings, while delivering close to optimal performance. For instance, at 10 dB SNR, with $P=1$ only, (thereby reducing the complexity by a factor of $\frac{1^2}{8^2}=\frac{1}{64}$), we can get to within 1.2 dB (1.7 dB) of the performance with exhaustive search, when using effective power approach (AoA approach). With $P=3$ (complexity reduction factor $\frac{3^2}{8^2}=\frac{9}{64}$), we are within 0.6 dB of exhaustive search, using either of the proposed approaches. This illustrates the efficacy of the proposed algorithms.\looseness-1

Comparing the performance of the effective power based approach to the estimated AoA based approach, we observed that the former approach generally performed better. (In the results shown, the effective power based approach performs significantly better when $P=1$, while the performance is very similar for other values of $P$). In hindsight, this is intuitive: Our AoA estimation algorithm attempts to lock on to directions that dominate in terms of the average (i.e., long-term) channel power, and for communication, we search over these average power dominant directions, and pick the directions with the most favorable instantaneous channel realization. On the other hand, in the effective power based approach, the dominant beam directions themselves are picked based on the instantaneous channel realization, enabling higher transmission rates. While this distinction indicates an advantage of using the effective power based method, we hasten to add that further studies are needed to arrive at definitive conclusions. In particular, the effective power metric is computed based solely on the instantaneous CSI-RS channels, and as such, to achieve this performance in practice, CSI-RS symbols corresponding to all the beam pair combinations must be transmitted at all time instants. With the AoA based approach, however, since the AoA estimator tries to lock on to the best long-term average power directions, it is plausible to reduce the CSI-RS overhead without impacting the performance of the AoA estimator significantly, e.g., by transmitting the CSI-RS for only a subset of the beam pairs at any time instant. Detailed studies in this context are an important topic for future research.\looseness-1

\begin{figure}[ht]
\vspace{-6mm}
\hspace{-20.5mm}
\includegraphics[width=210mm, height=101mm]{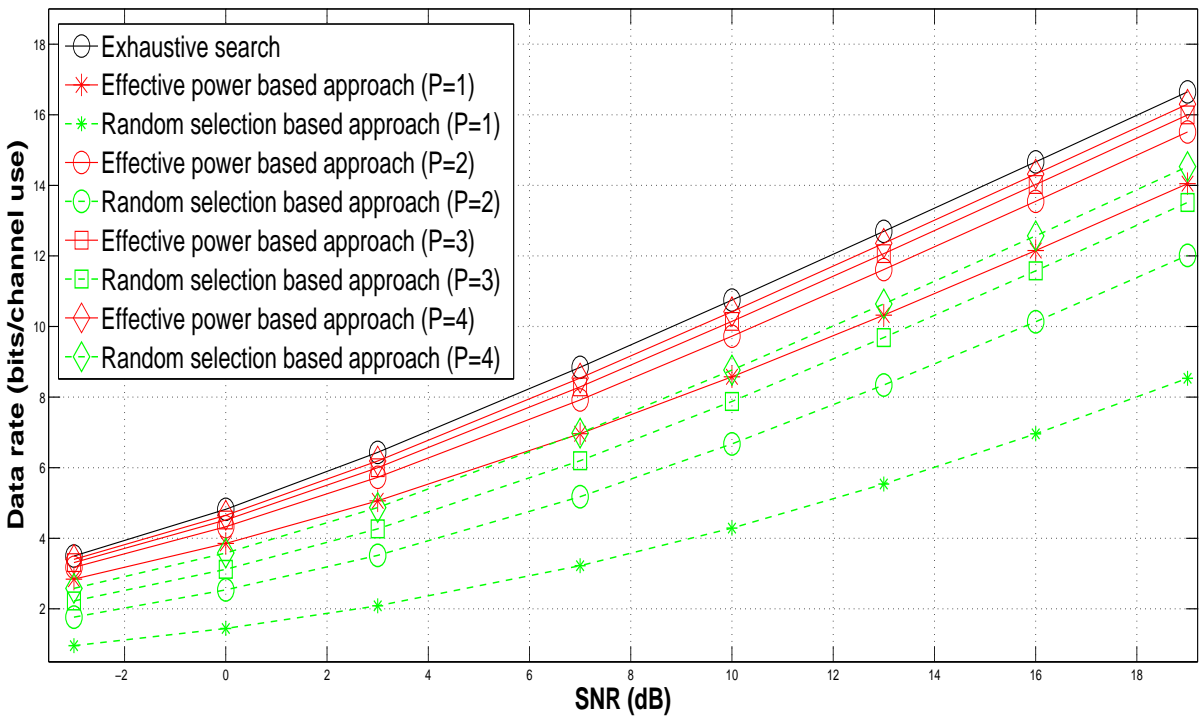}
\vspace{-13mm}
\caption{Performance of the effective power approach (used at both Tx and Rx), exhaustive search, and, random subset selection.}
\label{fig:perf_joint}
\vspace{-3mm}
\end{figure}

Next, we illustrate the performance of the effective power approach, applied to subselect beams at both the BS and the MS. To put the performance of the effective power approach into perspective, we also simulate \emph{random} beam selection, namely, we pick $P$ beams (at the BS/MS) randomly from the RF codebook (at the BS/MS), as opposed to picking beams based on the effective powers. 
Comparison with this random subset selection approach basically answers the following question: Is the proposed approach for beam selection really needed ? How much gain does it provide compared to a naive random selection of the beam subset? From the results (Fig. \ref{fig:perf_joint}), we observe that the proposed approach performs exceptionally well. For instance, at 10 dB SNR, using $P=3$, we get to within 0.9 dB of the optimal performance, and attain a complexity reduction factor of $\frac{3^2}{12^2} \times \frac{3^2}{8^2} > 100$. For the same value of $P$, the performance with random beam subset selection is 4.6 dB worse than that of exhaustive search, indicating the efficacy of the proposed approach.\looseness-1

\vspace{-5mm}
\section{Conclusions}
\vspace{-2mm}

We have investigated hybrid precoding for mmwave cellular communication with multiple antenna arrays, in the context of limited-feedback systems. While the complexity of optimal hybrid precoding is observed to be highly prohibitive, we have shown that, exploiting the sparse multipath nature of the mmwave channel, it is possible to cut down on the complexity (to realistic levels) with tolerable performance loss. This indicates that hybrid precoding, is, a feasible design option for next generation mmwave communication systems. Open technical issues include a more detailed comparison of the two complexity reduction techniques proposed here (e.g., exploring the impact of reducing the CSI-RS overhead on performance), and development of approaches that could provide further complexity reduction. Investigation of the proposed algorithms in the context of wideband systems, as well as multi-user MIMO systems, is crucial. In either system, exploiting the sparse nature of the mmwave channel to restrict attention to a set of dominant beam directions (obtained, e.g., after averaging the effective power across the subcarriers in a wideband OFDM system) can be expected to cut down the precoder optimization complexity.\looseness-1

\vspace{-5mm}
\section*{Acknowledgement}

The authors thank their colleagues from the Wireless Communications Laboratory, Samsung Research America-Dallas, and from the Communications Research Team, Samsung Electronics DMC R\&D Center, Suwon, South Korea for their support and feedback during the course of this research. In particular, they thank Dr. Taeyoung Kim for sharing the software implementation code for the UMi SCM channel.

\vspace{-2mm}

\vspace{-4mm}
\section*{Appendix}
\vspace{-3mm}
\noindent{\it{\bf{Proof of Property 1:}}}
From \eqref{eq:array_resp}, we have
\vspace{-2mm}
\begin{eqnarray}
{\bf{a}}^*(\phi_1, \theta_1){\bf{a}}(\phi_2, \theta_2) = \frac{1}{N}\displaystyle\sum_{n_z=1}^{N_z}\sum_{n_y=1}^{N_y} e^{j k d (n_z-1) (cos(\theta_1)-cos(\theta_2))} e^{j k d (n_y-1) (sin(\theta_1)sin(\phi_1)-sin(\theta_2)sin(\phi_2))} \\
= \frac{1}{N} \left[\displaystyle\sum_{n_z=1}^{N_z}{e^{j k d (n_z-1) (cos(\theta_1)-cos(\theta_2))}}\right] \left[\displaystyle\sum_{n_y=1}^{N_y}{e^{j k d (n_y-1) (sin(\theta_1)sin(\phi_1)-sin(\theta_2)sin(\phi_2))}}\right] \nonumber \ .
\end{eqnarray}
\vspace{-1mm}
If $(\phi_1=\phi_2)$ and $(\theta_1 = \theta_2)$, the right hand side (RHS) equates to $\frac{1}{N}[N_z] [N_y] = \frac{1}{N} N = 1$. If $(\phi_1 \neq \phi_2)$ and $(\theta_1 = \theta_2)$, the RHS equates to $\frac{1}{N}[N_z] [\frac{1-e^{j k d N_y (sin(\theta_1)sin(\phi_1)-sin(\theta_2)sin(\phi_2))}}{1-e^{j k d (sin(\theta_1)sin(\phi_1)-sin(\theta_2)sin(\phi_2))}}]  = \frac{1}{N_y} g_2(\phi_1, \theta_1, \phi_2, \theta_2)$. For the other two possibilities ($(\theta_1 \neq \theta_2)$ and [$(\phi_1=\phi_2)$ or $(\phi_1 \neq \phi_2)$]), the RHS equates to  $\frac{1}{N}[\frac{1-e^{j k d N_z (cos(\theta_1)-cos(\theta_2))}}{1-e^{j k d (cos(\theta_1)-cos(\theta_2))}}] [\frac{1-e^{j k d N_y (sin(\theta_1)sin(\phi_1)-sin(\theta_2)sin(\phi_2))}}{1-e^{j k d (sin(\theta_1)sin(\phi_1)-sin(\theta_2)sin(\phi_2))}}]  = \frac{1}{N} g_1(\theta_1, \theta_2) g_2(\phi_1, \theta_1, \\ \phi_2, \theta_2)$, which completes the proof.

\noindent{\it{\bf{Proof of Property 2:}}} The measurement $h_{i,j,b_R,b_T}$ is the received signal at Rx subarray $i$, when CSI-RS is transmitted from Tx subarray $j$, with the Rx (Tx) subarray beamforming in the direction with index $b_R$ ($b_T$). Specifically,
\vspace{-4mm}
\begin{equation} \label{eq:measurement}
h_{i,j,b_R,b_T} = [{{\bf{F}}^R_{RF}}^* {\bf{H}} {{\bf{F}}^T_{RF}} \ {\bf{e_j}}](i) \ ,
\end{equation}
\vspace{-3mm}
where, ${\bf{e_j}} = [0 \ 0 \ldots 1 \ldots 0 \ 0]'$ (with the $1$ occurring at location $j$), and $[\cdot](i)$ indicates the $i^{th}$  entry of the (row) vector within the square parenthesis. Further, the beamforming vector employed in row $i$ of ${{\bf{F}}^R_{RF}}^*$ is ${{{{\bf{a}}^{R_{SA}}}(\phi^R_{b_R}, \theta^R_{b_R})}}^*$, and the beamforming vector employed in column $j$ of ${{\bf{F}}^T_{RF}}$ is ${{{{\bf{a}}^{T_{SA}}}(\phi^T_{b_T}, \theta^T_{b_T})}}$, with ${\bf{a}}^{R_{SA}} (\cdot, \cdot)$ and ${\bf{a}}^{T_{SA}} (\cdot, \cdot)$ denoting the array response vectors corresponding to the Rx and Tx subarrays, resp.
We can write \eqref{eq:measurement} as,
\vspace{-3mm}
\begin{align}
h_{i,j,b_R,b_T} &= {{\bf{F}}^R_{RF}}^*(i,:) {\bf{H}} {{\bf{F}}^T_{RF}}(:,j) \label{eq:measurement2a} \\
&= \sqrt{N^T_{Ant} N^R_{Ant}}\displaystyle\sum_{r=0}^{N_r-1}{G_{r} {{\bf{F}}^R_{RF}}^*(i,:) {{\bf{a}}^R(\phi_{AoA,r} \ , \ \theta_{AoA,r})} {{\bf{a}}^T(\phi_{AoD,r} \ , \  \theta_{AoD,r})}}^* {\bf{F}}^T_{RF}(:,j) \label{eq:measurement2b} \ ,
\end{align}
\vspace{-2mm}
where ${{\bf{F}}^R_{RF}}^*(i,:)$ denotes row $i$ of ${{\bf{F}}^R_{RF}}^*$ and ${\bf{F}}^T_{RF}(:,j)$ denotes column $j$ of ${\bf{F}}^T_{RF}$.
Focussing on the product $ {{\bf{a}}^T(\phi_{AoD,r} \ , \  \theta_{AoD,r})}^* {\bf{F}}^T_{RF}(:,j)$, we have
\vspace{-3mm}
\begin{eqnarray}\label{eq:prod}
{{\bf{a}}^T(\phi_{AoD,r} \ , \  \theta_{AoD,r})}^* {\bf{F}}^T_{RF}(:,j) = {{\bf{a}}^T(\phi_{AoD,r} \ , \  \theta_{AoD,r})}^* [0 \ldots 0 \ \  {{{\bf{a}}^{T_{SA}}}(\phi^T_{b_T}, \theta^T_{b_T})}^{'} \ \  0 \ldots 0 ]^{'} \ .
\end{eqnarray}
\vspace{-2mm}
Denote by ${\bf{\tilde{a}}}^T$ the length $N^T_{Ant_{SA}}$ subset of ${{\bf{a}}^T(\phi_{AoD,r} \ , \  \theta_{AoD,r})}^*$, that actually ends up being multiplied with ${{{{\bf{a}}^{T_{SA}}}(\phi^T_{b_T}, \theta^T_{b_T})}}$ in \eqref{eq:prod}. Then, from the definition of the array response \eqref{eq:array_resp}, we can express ${\bf{\tilde{a}}}^T$ as
\vspace{-3mm}
\begin{equation}
{\bf{\tilde{a}}}^T=\frac{1}{\sqrt{N^T_{Ant}}} e^{-j k (d_z^{T_j} cos(\theta_{AoD,r}) + d_y^{T_j} sin(\theta_{AoD,r})sin(\phi_{AoD,r}))} \ \left[\sqrt{N^T_{Ant_{SA}}} \ {{\bf{a}}^{T_{SA}}(\phi_{AoD,r}, \theta_{AoD,r})}^{*}\right]
\end{equation}
\vspace{-2mm}
\noindent where $d_y^{T_j}$ and $d_z^{T_j}$ denote the distances (along the $y$ and $z$ dimensions) between Tx subarray $1$ and Tx subarray $j$. Denoting $\gamma_{r,j}^T=k (d_z^{T_j} cos(\theta_{AoD,r}) + d_y^{T_j} sin(\theta_{AoD,r})sin(\phi_{AoD,r}))$, we can thus rewrite \eqref{eq:prod} as,
\vspace{-3mm}
\begin{eqnarray}\label{eq:Txproj}
{{\bf{a}}^T(\phi_{AoD,r} \ , \  \theta_{AoD,r})}^* {\bf{F}}^T_{RF}(:,j) = \frac{e^{-j\gamma_r^T}}{\sqrt{N^T_{Ant}}}\left[\sqrt{N^T_{Ant_{SA}}} \ {{\bf{a}}^{T_{SA}}(\phi_{AoD,r}, \theta_{AoD,r})}^* {\bf{a}}^{T_{SA}}(\phi^T_{b_T}, \theta^T_{b_T})\right] \ .
\end{eqnarray}
\vspace{-2mm}
\noindent In a similar manner, we can obtain
\vspace{-2mm}
\begin{eqnarray}\label{eq:Rxproj}
{{\bf{F}}^R_{RF}}^*(i,:) {{\bf{a}}^R(\phi_{AoA,r} \ , \  \theta_{AoA,r})} = \frac{e^{-j\gamma_r^R}}{\sqrt{N^R_{Ant}}}\left[\sqrt{N^R_{Ant_{SA}}} \ {{\bf{a}}^{R_{SA}}(\phi_{AoA,r}, \theta_{AoA,r})}^* {\bf{a}}^{R_{SA}}(\phi^R_{b_R}, \theta^R_{b_R})\right] \ ,
\end{eqnarray}
\vspace{-2mm}
\noindent where $\gamma_{r,i}^R$ is defined for the Rx in an analogous manner as the preceding definition of $\gamma_{r,j}^T$ for the Tx. Substituting \eqref{eq:Txproj} and \eqref{eq:Rxproj} in \eqref{eq:measurement2b} completes the proof.

\noindent{\it{\bf{Proof of Lemma 1:}}} Since we are interested in studying the asymptotics (performance under large number of antenna elements, along the $y$ and $z$ dimensions, for each Rx/Tx subarray), w.l.o.g., in the following, we use $N_o$ to denote the number of antennas along either dimension (for both, the Tx and the Rx), and $N$ to denote the total number of antennas in each subarray (for both, the Tx, and the Rx). ($N=N_o^2$.) Referring to \eqref{eq:effpower_rx}, consider the effective power measurement for the receiver codebook beam index $l$. This beam index corresponds to an azimuth elevation pair $(\phi^R_l, \theta^R_l)$. 
Denoting, as before, the set of channel AoAs as $\{(\phi_{AoA,r}, \theta_{AoA,r}) \ ; \ 0 \leq r \leq N_r-1\}$, we consider three possible cases: (1) Rx beam is not steered towards any of the channel AoAs, in the elevation dimension, (2) Rx beam is steered towards a channel AoA in the elevation dimension, but away from the corresponding AoA in the azimuth plane, and (3) Rx beam is steered towards a channel AoA in both, elevation and azimuth. We will show that the effective power measured in Case (3) dominates the effective power measured in Cases (1) and (2).
\newline
{\emph{Case 1:}} $(\theta^R_l \neq \theta_{AoA,r}), \forall r$. In this case, using Property 1 and 2, we get
\vspace{-2mm}
\begin{eqnarray}
h_{i,j,l,b_T} =
\displaystyle\sum_{r=0}^{N_r-1} e^{-j(\gamma_{r,i}^R+\gamma_{r,j}^T)}G_{r} \left[\frac{1}{\sqrt{N}} \ g_1(\theta_{AoA,r}, \theta^R_l) \ g_2(\phi_{AoA,r}, \theta_{AoA,r}, \phi^R_l, \theta^R_l)\right] \nonumber \\
\left[\sqrt{N} \ {{\bf{a}}^{T_{SA}}(\phi_{AoD,r}, \theta_{AoD,r})}^* {\bf{a}}^{T_{SA}}(\phi^T_{b_T}, \theta^T_{b_T})\right]    + n_{i,j,l,b_T} \ .
\end{eqnarray}
(We explicitly incorporated the indices $(i,j,l,b_T)$ in the noise term as well). The product within the second parenthesis depends on the relationship between $(\phi^T_{b_T}, \theta^T_{b_T})$ and $(\phi_{AoD,r}, \theta_{AoD,r})$. Accordingly, we have the following sub-cases
\newline{Case 1a:} $(\theta^T_{b_T} \neq \theta_{AoD,r}), \forall r$. This implies that
\vspace{-2mm}
\begin{eqnarray}
h_{i,j,l,b_T} =
\displaystyle\sum_{r=0}^{N_r-1} e^{-j(\gamma_{r,i}^R+\gamma_{r,j}^T)}G_{r} \left[\frac{1}{\sqrt{N}} \ g_1(\theta_{AoA,r}, \theta^R_l) \ g_2(\phi_{AoA,r}, \theta_{AoA,r}, \phi^R_l, \theta^R_l)\right] \nonumber \\
\left[\frac{1}{\sqrt{N}} \ g_1(\theta_{AoD,r}, \theta^T_{b_T}) \ g_2(\phi_{AoD,r}, \theta_{AoD,r}, \phi^T_{b_T}, \theta^T_{b_T})\right]  + n_{i,j,l,b_T} \ .
\end{eqnarray}
Since $g_1$ and $g_2$ are bounded functions, and since the number of terms inside the summation is also bounded ($N_r$ terms), for large $N$, we get
$h_{i,j,l,b_T} \approx n_{i,j,l,b_T}$, so that $|h_{i,j,l,b_T}|^2 \approx |n_{i,j,l,b_T}|^2$.
\newline{Case 1b:} $(\theta^T_{b_T} = \theta_{AoD,k})$ for some $k$, and $(\phi^T_{b_T} \neq \phi_{AoD,k})$. Using Property 1, we get
\vspace{-2mm}
\begin{eqnarray}
h_{i,j,l,b_T} = e^{-j(\gamma_{k,i}^R+\gamma_{k,j}^T)}G_{k} \left[\frac{1}{\sqrt{N}} \ g_1(\theta_{AoA,k}, \theta^R_l) \ g_2(\phi_{AoA,k}, \theta_{AoA,k}, \phi^R_l, \theta^R_l)\right] \\
\left[\sqrt{\frac{N_o}{N_o}}g_2(\phi_{AoD,k}, \theta_{AoD,k}, \phi^T_{b_T}, \theta^T_{b_T})\right] \nonumber \\ +
\displaystyle\sum_{r=0 \ ; r \neq k}^{N_r-1} e^{-j(\gamma_{r,i}^R+\gamma_{r,j}^T)}G_{r} \left[\frac{1}{\sqrt{N}} \ g_1(\theta_{AoA,r}, \theta^R_l) \ g_2(\phi_{AoA,r}, \theta_{AoA,r}, \phi^R_l, \theta^R_l)\right] \nonumber \\
\left[\frac{1}{\sqrt{N}} \ g_1(\theta_{AoD,r}, \theta^T_{b_T}) \ g_2(\phi_{AoD,r}, \theta_{AoD,r}, \phi^T_{b_T}, \theta^T_{b_T})\right]  + n_{i,j,l,b_T} \nonumber \ .
\end{eqnarray}
Again, for large $N$, we get $h_{i,j,l,b_T} \approx n_{i,j,l,b_T}$, so that $|h_{i,j,l,b_T}|^2 \approx |n_{i,j,l,b_T}|^2$.
\newline{Case 1c:} $(\theta^T_{b_T} = \theta_{AoD,k})$ and $(\phi^T_{b_T} = \phi_{AoD,k})$ for some $k$. Using Property 1, we get
\vspace{-2mm}
\begin{eqnarray}
h_{i,j,l,b_T} = e^{-j(\gamma_{k,i}^R+\gamma_{k,j}^T)}G_{k} \left[\frac{1}{\sqrt{N}} \ g_1(\theta_{AoA,k}, \theta^R_l) \ g_2(\phi_{AoA,k}, \theta_{AoA,k}, \phi^R_l, \theta^R_l)\right] \left[\sqrt{N}\right]\\ +
\displaystyle\sum_{r=0 \ ; r \neq k}^{N_r-1} e^{-j(\gamma_{r,i}^R+\gamma_{r,j}^T)}G_{r} \left[\frac{1}{\sqrt{N}} \ g_1(\theta_{AoA,r}, \theta^R_l) \ g_2(\phi_{AoA,r}, \theta_{AoA,r}, \phi^R_l, \theta^R_l)\right] \nonumber \\
\left[\frac{1}{\sqrt{N}} \ g_1(\theta_{AoD,r}, \theta^T_{b_T}) \ g_2(\phi_{AoD,r}, \theta_{AoD,r}, \phi^T_{b_T}, \theta^T_{b_T})\right]  + n_{i,j,l,b_T} \nonumber \ .
\end{eqnarray}
For large $N$, we get $h_{i,j,l,b_T} \approx e^{-j(\gamma_{k,i}^R+\gamma_{k,j}^T)}G_{k} \left[g_1(\theta_{AoA,k}, \theta^R_l) \ g_2(\phi_{AoA,k}, \theta_{AoA,k}, \phi^R_l, \theta^R_l)\right] + n_{i,j,l,b_T}$. Now, $|h_{i,j,l,b_T}|^2= |e^{j(\gamma_{k,i}^R+\gamma_{k,j}^T)}h_{i,j,l,b_T}|^2 = |G_{k} g_1(\theta_{AoA,k}, \theta^R_l) \ g_2(\phi_{AoA,k}, \theta_{AoA,k}, \phi^R_l, \theta^R_l) + e^{j(\gamma_{k,i}^R+\gamma_{k,j}^T)} \\ n_{i,j,l,b_T}|^2$. Since $e^{j(\gamma_{k,i}^R+\gamma_{k,j}^T)}$ is a constant phase rotation, the random variable $e^{j(\gamma_{k,i}^R+\gamma_{k,j}^T)} n_{i,j,l,b_T}$ is still i.i.d. $CN(0, \sigma^2)$, so that we can write $|h_{i,j,l,b_T}|^2 = |G_{k} g_1(\theta_{AoA,k}, \theta^R_l) \ g_2(\phi_{AoA,k}, \theta_{AoA,k}, \phi^R_l, \theta^R_l) + n_{i,j,l,b_T}|^2$. (For ease of notation, we continue to use $n$ to denote the rotated noise, rather than introducing a new random variable). Expanding the square, we get, $|h_{i,j,l,b_T}|^2 = |G_{k}|^2 \ |g_1(\theta_{AoA,k}, \theta^R_l) \\ \ g_2(\phi_{AoA,k},  \theta_{AoA,k}, \phi^R_l, \theta^R_l)|^2 + |n_{i,j,l,b_T}|^2 + 2Re(G_{k} \  n_{i,j,l,b_T}^* \left[g_1(\theta_{AoA,k}, \theta^R_l) \ g_2(\phi_{AoA,k}, \theta_{AoA,k}, \phi^R_l, \theta^R_l)\right])$. Note that the index $k$ here is a function of the beam index $b_T$ under consideration. We will make this dependence explicit in the following.

We denote the set of Tx codebook beams that satisfy the constraints of cases 1a, 1b, 1c, as $\mathcal{C}_a^T$, $\mathcal{C}_b^T$, $\mathcal{C}_c^T$, resp. Next, towards computation of the effective power $P^R_{eff}(l)$ \eqref{eq:effpower_rx}, we define the partial sum,
\vspace{-2mm}
\begin{align}
S_{i,j,l} &=\frac{1}{{N^T_{Beams}}} \displaystyle\sum_{b_T=1}^{N^T_{Beams}}|h_{i,j,l,b_T}|^2 = \frac{1}{{N^T_{Beams}}} \left[\displaystyle\sum_{b_T \in \mathcal{C}_a^T}|h_{i,j,l,b_T}|^2 + \sum_{b_T \in \mathcal{C}_b^T}|h_{i,j,l,b_T}|^2 + \sum_{b_T \in \mathcal{C}_c^T}|h_{i,j,l,b_T}|^2\right] \\ \nonumber
&=\frac{1}{{N^T_{Beams}}} \left[\displaystyle\sum_{b_T=1}^{N^T_{Beams}} |n_{i,j,l,b_T}|^2\right] \\ \nonumber
&+\frac{1}{{N^T_{Beams}}} \left[\displaystyle\sum_{b_T \in \mathcal{C}_c^T}  {|G_{k(b_T)}|^2 \ |g_1(\theta_{AoA,k(b_T)}, \theta^R_l) \ g_2(\phi_{AoA,k(b_T)}, \theta_{AoA,k(b_T)}, \phi^R_l, \theta^R_l)|^2}\right] \\ \nonumber
&+ \frac{1}{{N^T_{Beams}}} \left[\displaystyle\sum_{b_T \in \mathcal{C}_c^T}  {2Re(G_{k(b_T)} \  n_{i,j,l,b_T}^* g_1(\theta_{AoA,k(b_T)}, \theta^R_l) \ g_2(\phi_{AoA,k(b_T)}, \theta_{AoA,k(b_T)}, \phi^R_l, \theta^R_l))} \right]
\end{align}

Averaging $S_{i,j,l}$ over the Rx and Tx subarrays (i.e., over $i$ and $j$), for large number of subarrays, the first term in the preceding summation is approximately $\sigma^2$, the second term is unaffected (independent of $i$ and $j$), while the third term is approximately zero (since $E(n^*G_k)=0, \ \forall k$). Hence, in the large system limit, we get
\vspace{-2mm}
\begin{equation}\label{eq:P_R_Eff1}
P^R_{eff}(l) \approx \sigma^2 + \frac{1}{{N^T_{Beams}}} \left[\displaystyle\sum_{b_T \in \mathcal{C}_c^T}  {|G_{k(b_T)}|^2 \ |g_1(\theta_{AoA,k(b_T)}, \theta^R_l) \ g_2(\phi_{AoA,k(b_T)}, \theta_{AoA,k(b_T)}, \phi^R_l, \theta^R_l)|^2}\right].
\end{equation}

This completes the analysis for Case 1. Since the analysis for the next 2 cases (and their subcases) is similar, we will skip some of the details in the following.
\newline
\noindent
{\emph{Case 2:}} $(\theta^R_l = \theta_{AoA,m})$ for some $m$, and $(\phi^R_l \neq \phi_{AoA,m})$. Using Property 1 and 2, we get
\vspace{-2mm}
\begin{align}
h_{i,j,l,b_T} &= e^{-j(\gamma_{m,i}^R+\gamma_{m,j}^T)}G_{m} \left[g_2(\phi_{AoA,m}, \theta_{AoA,m}, \phi^R_l, \theta_{AoA,m})\right] \\ &\left[\sqrt{N} \ {{\bf{a}}^{T_{SA}}(\phi_{AoD,m}, \theta_{AoD,m})}^* {\bf{a}}^{T_{SA}}(\phi^T_{b_T}, \theta^T_{b_T})\right]  \nonumber \\
&+ \displaystyle\sum_{r=0 \ ; r \neq m}^{N_r-1} e^{-j(\gamma_{r,i}^R+\gamma_{r,j}^T)}G_{r} \left[\frac{1}{\sqrt{N}} \ g_1(\theta_{AoA,r}, \theta^R_l) \ g_2(\phi_{AoA,r}, \theta_{AoA,r}, \phi^R_l, \theta^R_l)\right] \nonumber \\
&\left[\sqrt{N} \ {{\bf{a}}^{T_{SA}}(\phi_{AoD,r}, \theta_{AoD,r})}^* {\bf{a}}^{T_{SA}}(\phi^T_{b_T}, \theta^T_{b_T})\right]    + n_{i,j,l,b_T}  \nonumber \ .
\end{align}

\noindent
{Case 2a:} $(\theta^T_{b_T} \neq \theta_{AoD,r}), \forall r$.
For large $N$, we get $h_{i,j,l,b_T} \approx n_{i,j,l,b_T}$, so that $|h_{i,j,l,b_T}|^2 \approx |n_{i,j,l,b_T}|^2$.

\noindent{Case 2(b,1):} $(\theta^T_{b_T} = \theta_{AoD,k})$ for some $k$, with $k\neq m$, and $(\phi^T_{b_T} \neq \phi_{AoD,k})$.
For large $N$, we get $h_{i,j,l,b_T} \approx n_{i,j,l,b_T}$, so that $|h_{i,j,l,b_T}|^2 \approx |n_{i,j,l,b_T}|^2$.

\noindent{Case 2(b,2):} $(\theta^T_{b_T} = \theta_{AoD,m})$ and $(\phi^T_{b_T} \neq \phi_{AoD,m})$.
For large $N$, we get
\vspace{-2mm}
\begin{eqnarray}
h_{i,j,l,b_T} \approx e^{-j(\gamma_{m,i}^R+\gamma_{m,j}^T)}G_{m} \left[g_2(\phi_{AoA,m}, \theta_{AoA,m}, \phi^R_l, \theta_{AoA,m})\right] \\
\left[g_2(\phi_{AoD,m}, \theta_{AoD,m}, \phi^T_{b_T}, \theta_{AoD,m})\right] + n_{i,j,l,b_T}  \ . \nonumber
\end{eqnarray}

\noindent{Case 2(c,1):} $(\theta^T_{b_T} = \theta_{AoD,k})$ and $(\phi^T_{b_T} = \phi_{AoD,k})$, with $k \neq m$.
For large $N$, we get
\vspace{-2mm}
\begin{equation}
h_{i,j,l,b_T} \approx e^{-j(\gamma_{k,i}^R+\gamma_{k,j}^T)}G_{k} \ g_1(\theta_{AoA,k}, \theta_{AoA,m}) \ g_2(\phi_{AoA,k}, \theta_{AoA,k}, \phi^R_l, \theta_{AoA,m}) + n_{i,j,l,b_T}  \ .
\end{equation}

\noindent{Case 2(c,2):} $(\theta^T_{b_T} = \theta_{AoD,m})$ and $(\phi^T_{b_T} = \phi_{AoD,m})$.
For large $N$, we get
\vspace{-2mm}
\begin{eqnarray}
h_{i,j,l,b_T} \approx e^{-j(\gamma_{m,i}^R+\gamma_{m,j}^T)}G_{m} \left[g_2(\phi_{AoA,m}, \theta_{AoA,m}, \phi^R_l, \theta_{AoA,m})\right] \left[\sqrt{N}\right]
+ n_{i,j,l,b_T}  \ .
\end{eqnarray}

Following the approach in Case 1, we can write the partial sum $S_{i,j,l}$ as an addition of five summations (corresponding to the five subcases considered here).  While all five subcases contribute noise terms, non-noise (i.e., signal) terms are contributed by Cases 2(b,2), 2(c,1) and 2(c,2). Note that the signal term contributed by Case 2(c,2) to $S_{i,j,l}$ scales with $N$, while the contributions from the other two cases are bounded and independent of $N$, so that the contribution from the signal term contributed by Case 2(c,2) dominates for large $N$. Following the same principles described in Case 1, we therefore obtain (assuming that there exists a Tx codebook beam that satisfies the condition for Case 2(c,2); cf. Remark 2, page 29)
\vspace{-2mm}
\begin{equation}\label{eq:P_R_Eff2}
P^R_{eff}(l) \approx \sigma^2 + N |G_{m}|^2 |g_2(\phi_{AoA,m}, \theta_{AoA,m}, \phi^R_l, \theta_{AoA,m})|^2.
\end{equation}

\noindent {\emph{Case 3:}} $(\theta^R_l = \theta_{AoA,q})$ and $(\phi^R_l = \phi_{AoA,q})$ for some $q$. Using Property 1 and 2, we get
\vspace{-2mm}
\begin{eqnarray}
h_{i,j,l,b_T} = {e^{-j(\gamma_{q,i}^R+\gamma_{q,j}^T)}G_{q} \left[\sqrt{N}\right] \left[\sqrt{N} \ {{\bf{a}}^{T_{SA}}(\phi_{AoD,q}, \theta_{AoD,q})}^* {\bf{a}}^{T_{SA}}(\phi^T_{b_T}, \theta^T_{b_T})\right]}  \\
+ \displaystyle\sum_{r=0 \ ; r \neq q}^{N_r-1} e^{-j(\gamma_{r,i}^R+\gamma_{r,j}^T)}G_{r} \left[\frac{1}{\sqrt{N}} \ g_1(\theta_{AoA,r}, \theta^R_l) \ g_2(\phi_{AoA,r}, \theta_{AoA,r}, \phi^R_l, \theta^R_l)\right] \nonumber \\
\left[\sqrt{N} \ {{\bf{a}}^{T_{SA}}(\phi_{AoD,r}, \theta_{AoD,r})}^* {\bf{a}}^{T_{SA}}(\phi^T_{b_T}, \theta^T_{b_T})\right]    + n_{i,j,l,b_T}  \nonumber \ .
\end{eqnarray}

\noindent{Case 3a:} $(\theta^T_{b_T} \neq \theta_{AoD,r}), \forall r$.
For large $N$, we get
\vspace{-2mm}
\begin{eqnarray}
h_{i,j,l,b_T} \approx {e^{-j(\gamma_{q,i}^R+\gamma_{q,j}^T)} \ G_{q} \ g_1(\theta_{AoD,q}, \theta^T_{b_T}) \ g_2(\phi_{AoD,q}, \theta_{AoD,q}, \phi^T_{b_T}, \theta^T_{b_T})}  + n_{i,j,l,b_T} \ .
\end{eqnarray}

\noindent{Case 3(b,1):} $(\theta^T_{b_T} = \theta_{AoD,k})$ for some $k$, with $k\neq q$, and $(\phi^T_{b_T} \neq \phi_{AoD,k})$.
For large $N$, we get
\vspace{-2mm}
\begin{eqnarray}
h_{i,j,l,b_T} \approx {e^{-j(\gamma_{q,i}^R+\gamma_{q,j}^T)} \ G_{q} \ g_1(\theta_{AoD,q}, \theta^T_{b_T}) \ g_2(\phi_{AoD,q}, \theta_{AoD,q}, \phi^T_{b_T}, \theta^T_{b_T})}  + n_{i,j,l,b_T} \ .
\end{eqnarray}

\noindent{Case 3(b,2):} $(\theta^T_{b_T} = \theta_{AoD,q})$ and $(\phi^T_{b_T} \neq \phi_{AoD,q})$.
For large $N$, we get
\vspace{-2mm}
\begin{eqnarray}
h_{i,j,l,b_T} \approx {e^{-j(\gamma_{q,i}^R+\gamma_{q,j}^T)} \ G_{q} \ \sqrt{N} \ g_2(\phi_{AoD,q}, \theta_{AoD,q}, \phi^T_{b_T}, \theta_{AoD,q})}  + n_{i,j,l,b_T} \ .
\end{eqnarray}

\noindent{Case 3(c,1):} $(\theta^T_{b_T} = \theta_{AoD,k})$ and $(\phi^T_{b_T} = \phi_{AoD,k})$, with $k \neq q$.
\vspace{-2mm}
\begin{eqnarray}
h_{i,j,l,b_T} \approx {e^{-j(\gamma_{q,i}^R+\gamma_{q,j}^T)} \ G_{q} \ g_1(\theta_{AoD,q}, \theta_{AoD,k}) \ g_2(\phi_{AoD,q}, \theta_{AoD,q}, \phi_{AoD,k}, \theta_{AoD,k})} \\
+  {e^{-j(\gamma_{k,i}^R+\gamma_{q,k}^T)} \ G_{k} \ g_1(\theta_{AoA,k}, \theta_{AoA,q}) \ g_2(\phi_{AoA,k}, \theta_{AoA,k}, \phi_{AoA,q}, \theta_{AoA,q})}  + n_{i,j,l,b_T} \nonumber .
\end{eqnarray}

\noindent{Case 3(c,2):} $(\theta^T_{b_T} = \theta_{AoD,q})$ and $(\phi^T_{b_T} = \phi_{AoD,q})$.
\vspace{-2mm}
\begin{eqnarray}
h_{i,j,l,b_T} \approx e^{-j(\gamma_{q,i}^R+\gamma_{q,j}^T)} \ G_{q} \ N  + n_{i,j,l,b_T} \ .
\end{eqnarray}

Using the same ideas as in Case 1 and Case 2, and observing that the "signal" term contributed by Case 3(c,2) to $S_{i,j,l}$ dominates for large $N$ (it scales as $N^2$, as compared to the contribution of Case 3(b,2) (scales as $N$), and the contribution of other cases (independent of $N$)), we therefore get (assuming that there exists a Tx codebook beam that satisfies the condition for Case 3(c,2); cf. Remark 2, page 29)
\vspace{-2mm}
\begin{equation}\label{eq:P_R_Eff3}
P^R_{eff}(l) \approx \sigma^2 + N^2 |G_{q}|^2.
\end{equation}

The preceding equation proves part (b) of Lemma \ref{lemma:main}. To prove part (a), it remains to show that, for large $N$, the effective power in \eqref{eq:P_R_Eff3} dominates the effective power in \eqref{eq:P_R_Eff1} and \eqref{eq:P_R_Eff2}. This is evident based on the fact that the power in $\eqref{eq:P_R_Eff3}$ scales as $N^2$, while the power in \eqref{eq:P_R_Eff1} is independent of $N$ and the power in \eqref{eq:P_R_Eff2} scales as $N$. Consequently, by picking $N$ large enough, it is possible to ensure that the power in \eqref{eq:P_R_Eff3} is greater than that in \eqref{eq:P_R_Eff1} and \eqref{eq:P_R_Eff2} (with arbitrarily large probability). More precisely, consider the comparison between the power in \eqref{eq:P_R_Eff3} and \eqref{eq:P_R_Eff2}. We need to prove that, for large $N$, $N |G_{m}|^2 |g_2(\phi_{AoA,m}, \theta_{AoA,m}, \phi^R_l, \theta_{AoA,m})|^2 < N^2 |G_{q}|^2$, or, $\frac{|G_{m}|^2 |g_2(\phi_{AoA,m}, \theta_{AoA,m}, \phi^R_l, \theta_{AoA,m})|^2} {|G_{q}|^2} < N$. Replacing $g_2(\cdot)$ by the upper bound in \eqref{eq:ub2}, it suffices to show that, by picking $N$ large enough, we can ensure that
\vspace{-2mm}
\begin{equation}
\frac{2|G_{m}|^2} {|1-e^{j k d (sin(\theta_{AoA,m})sin(\phi_{AoA,m})-sin(\theta_{AoA,m})sin(\phi^R_l))}|^2 \ |G_{q}|^2} < N \ .
\end{equation}
\vspace{-2mm}
\noindent The expression on the LHS of this inequality is a random variable (the channel gains $G_m$ and $G_q$ are random variables), so that we are interested in studying the preceding inequality in a probabilistic sense. Further, note that this random variable, say $A$, is not a function of $N$. Therefore, denoting the cumulative distribution function of $A$ as $F_A(a)$, for $N > {F_A}^{-1}(1-\epsilon)$, $Pr(x<N) > 1-\epsilon$, $\forall \epsilon$. Hence, making $\epsilon$ arbitrarily small, the desired inequality holds with arbitrarily large probability. That the power in \eqref{eq:P_R_Eff3} exceeds the power in \eqref{eq:P_R_Eff1} can be shown in a similar manner, thereby completing the proof for part (a) of the Lemma.

{\emph{Remark 2:}} While computing the partial sum $S_{i,j,l}$ in Case (2) (Rx beam steered towards an elevation AoA) and Case (3) (Rx beam steered towards an elevation-azimuth AoA), we assumed existence of a Tx codebook beam in the direction of the corresponding AoDs. This simplifies the analysis. (If these assumptions are relaxed, certain conclusions may still be derived, although we do not conduct this analysis here). Note, however that, (in spirit of the discussion following the statement of Lemma 1 (page 13)), it is reasonable to expect that even though these assumptions may not hold strictly, the results of Lemma 1 can still be understood to hold in the approximate sense, assuming that the Tx codebook contains beams reasonably close to the AoD values.\looseness-1

\vspace{-3mm}


\end{document}